\newtheorem{theorem}{Theorem}
\newtheorem{definition}[theorem]{Definition}
\newtheorem{lemma}[theorem]{Lemma}
\newtheorem{lemma*}{Lemma}
{}
{}
\newtheorem{proposition}[theorem]{Proposition}
\newcommand {\C}   {\mathbb C}
\newcommand {\D}   {\mathbb D}
\newcommand {\F}   {\mathbb F}
\newcommand {\Z}   {\mathbb Z}
\newcommand {\Q}   {\mathbb Q}
\newcommand{\calT}   {\mathcal T}
\newcommand{\OO}{\ensuremath{{{O}}}}
\newcommand{\sO}{\ensuremath{\widetilde{{O}}}}
\newcommand{\sOB}{\ensuremath{\widetilde{{O}}_B}}
\newcommand{\makeremark}[2]{
  \newcommand{#1}[1]
    {
    \color{blue}
     $\longrightarrow$ \textsc{#2: }
     ##1
     $\longleftarrow$
    \color{black}
    }
}    
\makeremark{\SL}{SL}
\makeremark{\FR}{Fabrice says}
\makeremark{\MP}{Marc says}
\makeremark{\ET}{Elias says}
\makeremark{\LP}{Luis shouts}
\makeremark{\YB}{Yacine says}
\definecolor{1ST}{rgb}{1,0,0}
\definecolor{2ND}{rgb}{1,0.5,0}
\definecolor{3RD}{rgb}{1,0,1}
\newcommand{\ideal}[1]{\langle #1 \rangle }
\newcommand{\shutup}[1]{}
\renewcommand{\leq}{\leqslant}  
\renewcommand{\geq}{\geqslant}
\def\cramped                           
\newcommand {\comp}{\ensuremath{\sO_B(d^8+d^7\tau)}} 
\thanks{INRIA Nancy Grand Est, LORIA laboratory, Nancy, France. {\tt Firstname.Name@inria.fr}}
\thanks{INRIA Paris-Rocquencourt and IMJ
          (Institut de Math\'ematiques de Jussieu, Universit\'e Paris 6, CNRS), Paris, France. {\tt Firstname.Name@inria.fr}}
\begin{document}
\makeRR  

\section{Introduction}\label{sec:intro}
One approach, that can be traced back to Kronecker, to solve a system of
polynomials with a finite number of solutions is to compute a rational
parameterization of its solutions. 
Such a representation of the (complex) solutions of a system is given by a set
of univariate polynomials and associated rational one-to-one mappings that send
the roots of the univariate polynomials to the solutions of the system.
%
%
Such parameterizations enable to reduce computations on the system to
computations with univariate polynomials and thus ease, for instance, the
isolation of the solutions or the evaluation of other polynomials at the
solutions.

The computation of such parameterizations has been a focus of interest for a
long time; see for example
\cite{ABRW,VegKah:curve2d:96,Rou99,GLS01,bostan2003fast,det-jsc-2009} and
references therein. Most algorithms first shear the coordinate system, with a
linear change of variables, so that the input algebraic system is in generic
position, that is such that no two solutions are vertically aligned. These
algorithms thus need a \emph{linear separating form},
that is a linear combination of the coordinates that takes different values when
evaluated at different solutions of the system. Since a random linear form is
separating with probability one, 
probabilist Monte-Carlo algorithms can overlook this issue.
However, for deterministic algorithms, computing a linear separating
  form is critical, especially because this is, surprisingly, the current
  bottleneck for bivariate systems, as discussed below.

We restrict our attention to systems of two bivariate polynomials of
  total degree bounded by $d$ with integer coefficients of bitsize bounded by
  $\tau$. For such systems, the approach with best known worst-case bit
  complexity for computing a rational parameterization was first introduced by
  Gonzalez-Vega and El Kahoui \cite{VegKah:curve2d:96} (see also
  \cite{lgv-in-etdidapc-02}): their initial analysis of $\sOB(d^{16}+d^{14}\tau
  ^2)$ was improved by Diochnos et al.  \cite[Lemma 16 \& Theorem
  19]{det-jsc-2009}\footnote{The overall bit complexity stated in \cite[Theorem
    19]{det-jsc-2009} is $\sOB(d^{12}+d^{10}{\tau}^2)$ because it includes the
    isolation of the solutions of the system. Note that this complexity
    trivially decreases to $\sOB(d^{10}+d^9\tau)$ by the recent result of
    Sagraloff \cite{sagraloff2012issacNewDsc} which improves the complexity of
    isolating the real roots of a univariate polynomial.
    Note also that Diochnos et al.  \cite{det-jsc-2009} present two algorithms,
    the M\_RUR and G\_RUR algorithms, both with bit complexity
    $\sOB(d^{12}+d^{10}{\tau}^2)$. However, this complexity is worst case only
    for the M\_RUR algorithm. As pointed out by Emeliyanenko and Sagraloff
    \cite{sagraloff2012issacBisolve}, the G\_RUR algorithm uses a modular gcd
    algorithm over an extension field whose considered bit complexity is
    expected.} to (i) $\sOB(d^{10}+d^9\tau)$ for computing a separating linear
  form and then (ii) $\sOB(d^{7}+d^6\tau)$ for computing a
  parameterization. Computing a separating linear form is thus the bottleneck of
  the computation of the rational parameterization.  This is still true even
  when considering the additional phase of computing isolating boxes of the
  solutions (from the rational parameterization), which state-of-the-art
  complexity is in 
  {$\sOB(d^8+d^7\tau)$} \cite[Proposition
  19]{bouzidi2013RurWorstCase}. 

\paragraph{Main results. }
Our main contribution is a new deterministic algorithm of worst-case bit
complexity \comp\ for computing a separating linear form of a system of two
bivariate polynomials of total degree at most $d$ and integer coefficients of
bitsize at most $\tau$ (Theorem~\ref{th:algo:sep-elem}). {This decreases
  by a factor $d^2$ the best known complexity for this problem.}

As a direct consequence, using our algorithm for computing a separating linear
form directly yields a rational parameterization within the same overall
complexity as our algorithm, both in the approach of Gonzalez-Vega et
al. \cite{VegKah:curve2d:96,det-jsc-2009} and in that of Bouzidi et
al. \cite{bouzidi2013RurWorstCase} for computing the alternative rational
parameterization as defined in \cite{Rou99}.  As a byproduct, we obtain an
algorithm for computing the number of {(complex)} distinct solutions of such systems within
the same complexity, i.e. \comp.  


\section{Overview and  organization}
\label{sec:overview}

Let $P$ and $Q$ be two bivariate polynomials of total degree bounded by $d$ and
integer coefficients of maximum bitsize $\tau$. Let $I=\ideal{P,Q}$ be the ideal
they define and suppose that $I$ is zero-dimensional.  The goal is to find a linear form $T=X+aY$, with $a\in\mathbb{Z}$,
that separates the solutions of $I$.

We first outline a classical algorithm which is essentially the same as those
proposed, for instance, in \cite[Lemma 16]{det-jsc-2009} and \cite[Theorem
24]{kerber11JSC}\footnote{The stated complexity of \cite[Theorem
  24]{kerber11JSC} is $\sOB(d^9\tau)$, but it seems the fact that the sheared
  polynomials have bitsize in $\sO(d+\tau)$ (see
  Lemma~\ref{lem:complexity:shear}) instead of $\sO(\tau)$ has been overlooked
  in their proof.} and whose complexity, in $\sOB(d^{10}+d^9\tau)$, is the best
known so far for this problem. This algorithm serves two purposes: it gives some
insight on the more involved \comp-time algorithm that
follows and it will be used in that algorithm but over $\Z/\mu\Z$ instead
of~$\mathbb{Z}$.

\paragraph{Known $\mathbf{\sOB(d^{10}+d^9\tau)}$-time algorithm for computing a
  separating linear form.}

The idea is to work with a ``generic" linear form $T=X+SY$, where $S$ is an
indeterminate, and find conditions such that the specialization of $S$ by an
integer $a$ gives a separating form.  We thus consider $P(T-SY,Y)$ and
$Q(T-SY,Y)$, the ``generic'' sheared polynomials associated to $P$ and $Q$, and
$R(T,S)$ their resultant with respect to $Y$.  This polynomial has been
extensively used and defined in several context; see for instance the related
$u$-resultant  \cite{VanDerWaerden}.

It is known that, in a set $\mathcal{S}$ of $d^4$ integers, there exists at
least one integer $a$ such that $X+aY$ is a separating form for $I$ since $I$
has at most $d^2$ solutions which define at most $d^2\choose 2$ directions in
which two solutions are aligned.  Hence, a separating form can be found by
computing, for every $a$ in $\cal S$, the degree of the squarefree part of
${R(T,a)}$ and by choosing 
 one $a$ for which this degree is maximum.  Indeed, for
any (possibly non-separating) linear form $X+aY$, the number of distinct roots
of $R(T,a)$, which is the degree of its squarefree part, is always smaller than
or equal to the number of distinct solutions of $I$, and equality is attained
when the linear form $X+aY$ is separating (Lemma~\ref{lem:ineq1}). The
complexity of this algorithm is in $\sOB(d^{10}+d^9\tau)$ because, for $d^4$
values of $a$, the polynomial $R(T,a)$ can be shown to be of degree $O(d^2)$ and
bitsize $\sO(d^2+d\tau)$, and its squarefree part can be computed in
$\sOB(d^6+d^5\tau)$ time.

\paragraph{$\mathbf{\sO_B(d^8+d^7\tau)}$-time algorithm for computing a separating linear form.}
To reduce the complexity of the search for a separating form, one can first
consider to perform naively the above algorithm on the system $I_\mu=\ideal{P \bmod \mu, Q \bmod
  \mu}$ in $\Z_\mu=\Z/\mu\Z$, where
$\mu$ is a prime number upper bounded by some polynomial in $d$ and $\tau$ (so that the bit 
complexity of  arithmetic
operations in $\Z_\mu$  is polylogarithmic in $d$ and $\tau$).
%
The resultant $R_\mu(T,S)$ of ${P}(X-SY,Y)\bmod \mu$ and ${Q}(X-SY,Y)\bmod \mu$
with respect to $Y$ can be computed in $\sOB(d^6+d^5\tau)$ 
bit operations and, since its degree is at most $2d^2$ in each variable,
evaluating it at $S=a$ in $\Z_\mu$ can be easily done in $\sOB(d^4)$ bit
operations.
Then, the computation of its squarefree part does not
suffer anymore from the coefficient growth, and it becomes softly linear in its
degree, that is $\sOB(d^2)$. Considering $d^4$ choices of $a$, we get an
algorithm that computes a separating form for $I_\mu$ in $\sOB(d^8)$ time in
$\Z_\mu$. However, a serious problem remains, that is to ensure that a
separating form for $I_\mu$ is also a separating form for $I$. This issue
requires to develop a more subtle algorithm.

We first show, in Section~\ref{sec:sep-modulo}, a critical property (Proposition
\ref{prop:separation-correspondence}) which states that a separating linear form
over $\Z_\mu$ is also separating over $\Z$ when $\mu$ is a \emph{lucky} prime
number, which is, essentially, a prime such that the number of solutions of
$\ideal{P,Q}$ is the same over $\Z$ and over $\Z_\mu$. We then show in
Sections~\ref{sec:lucky-nbroot} to \ref{sec:gather-lucky} how to compute such a
lucky prime number. We do that by first proving in Section
\ref{sec:lucky-nbroot} that, under mild conditions on $\mu$, the number of
solutions over $\Z_\mu$ is always less than or equal to the number of solutions
over $\Z$ (Proposition~\ref{prop:lessroot}) and then by computing a bound on the
number of unlucky primes (Proposition~\ref{prop:luckyroot}). Computing a lucky
prime can then be done by choosing a $\mu$ that maximizes the number of
solutions over $\Z_\mu$ among a
set of primes of cardinality $\widetilde{\Theta}(d^4+d^3\tau)$.  For that
purpose, we present in Section \ref{sec:counting-Zmu} a new algorithm, of
independent interest, for computing in $\sO(d^4)$ arithmetic operations the
number of distinct solutions of the system $I_\mu$ in $\Z_\mu$; this algorithm
is based on a classical triangular decomposition.  This yields, in
Section~\ref{sec:gather-lucky}, a \comp-time algorithm for computing a lucky
prime $\mu$ in $\sO(d^4+d^3\tau)$.  Now, $\mu$ is fixed, and we can apply the
algorithm outlined above for computing a separating form for $I_\mu$ in $\Z_\mu$
in $\sOB(d^8)$ time (Section~\ref{sec:gathering}). This form, which is also
separating for $I$, is thus obtained with a total bit complexity of \comp\
(Theorem~\ref{th:algo:sep-elem}).

\section{Notation and preliminaries}\label{sec:prelim}
We introduce notation and recall classical material about subresultant
sequences.

The bitsize of an integer $p$ is the number of bits needed to represent it, that
is $\lfloor\log p\rfloor+1$ ($\log$ refers to the logarithm in base 2). For
rational numbers, we refer to the bitsize as to the maximum bitsize of its
numerator and denominator.  The bitsize of a polynomial with integer or rational
coefficients is the \emph{maximum} bitsize of its coefficients. As mentioned
earlier, $O_B$ refers to the bit complexity and $\sO$ and $\sOB$ refer to complexities
 where polylogarithmic factors are omitted.

In the following, $\mu$ is a prime number and we denote by $\Z_\mu$ the quotient
$\Z/\mu\Z$.  We denote by $\phi_\mu$: $\Z \rightarrow \Z_\mu$ the reduction
modulo $\mu$, and extend this definition to the reduction of polynomials with
integer coefficients.  We denote by $\D$ a unique factorization domain,
typically $\Z[X,Y]$, $\Z[X]$, $\Z_\mu[X]$, $\Z$ or $\Z_\mu$. We also denote by
$\F$ a field, typically $\Q$, $\C$, or $\Z_\mu$.

For any polynomial $P\in \D[X]$, let $Lc_X(P)$ denote its leading coefficient
with respect to the variable $X$, 
$d_X(P)$ its degree with respect to $X$, and $\overline{P}$ its squarefree
part. The ideal generated by two polynomials $P$ and $Q$ is denoted
$\ideal{P,Q}$, and the affine variety of an ideal $I$ is denoted by $V(I)$; in
other words, $V(I)$ is the set of distinct solutions of the system $\{P,Q\}$.
The solutions are always considered in the algebraic closure of $\D$
and the number of distinct solutions is denoted by $\#V(I)$.
For a point $\sigma \in V(I)$, $\mu_I(\sigma)$ denotes the
multiplicity of $\sigma$ in $I$. For simplicity, we refer indifferently to the
ideal $\ideal{P,Q}$ and to the  system $\{P,Q\}$.

\emph{We finally introduce the following notation which are extensively used
  throughout the paper.} Given the two input polynomials $P$ and $Q$, we
consider the ``generic'' change of variables $X=T-SY$, and define the
``sheared'' polynomials $P(T-SY,Y)$, $Q(T-SY,Y)$, and their resultant with
respect to $Y$,
\begin{equation}\label{eq0}
{ R(T,S)}=Res_Y({P}(T-SY,Y),{Q}(T-SY,Y)).
\end{equation}
The complexity bounds on the degree, bitsize and computation of these
polynomials are analyzed at the end of this section in
Lemma~\ref{lem:complexity:shear}.  Let $L_R(S)$ be the leading coefficient of
$R(T,S)$ seen as a polynomial in $T$.  Let ${ L_{P}(S)}$ and ${ L_{Q}(S)}$ be
the leading coefficients of ${P}(T-SY,Y)$ and $Q(T-SY,Y)$, seen as polynomials
in $Y$
; it is straightforward that these leading coefficients do not depend on $T$. In
other words:

\begin{equation}\label{eq1}
\begin{array}{lll}
L_{P}(S) = Lc_Y({P}(T-SY,Y)),\ \ & L_{Q}(S) = Lc_Y({Q}(T-SY,Y)), \ \ & L_R(S)= Lc_T(R(T,S)).
\end{array}
\end{equation}

%

\subsection{Subresultant sequences}
\label{sec:subres-prelim}

We recall here the definition of subresultant sequences and some related
properties. Note that we only use subresultants in
Section~\ref{sec:tri-dec} in which we recall a classical triangular
decomposition algorithm. 

We first recall the concept of
\emph{polynomial determinant} of a matrix which is used in the definition of subresultants.
Let $M$ be an $m \times n$ matrix with $m \leq n$ and $M_i$ be the square submatrix of $M$ consisting of the first $m-1$ columns
and the $i$-th column of $M$, for $i=m,\ldots,n$. The \emph{polynomial determinant} of $M$ is the polynomial
defined as $\det(M_{m})Y^{n-m}+\det(M_{m + 1})Y^{n-(m+1)}+\ldots + \det(M_{n})$.

Let $P =\sum_{i=0}^p a_i Y^i$ and $Q= \sum_{i = 0}^q b_i Y^i$ be two polynomials
in $\D[Y]$ and assume without loss of generality that $p \geq q$. The Sylvester
matrix of $P$ and $Q$, $Sylv(P,Q)$ is the $(p+q)$-square matrix whose rows are
$Y^{q-1}P,\ldots,P,Y^{p-1}Q,\ldots,Q$ considered as vectors in the basis
$Y^{p+q-1},\ldots,Y,1$. 

\begin{center}

\begin{small}
$Sylv(P,Q)=\begin{array}{cc} 
\overbrace{\rule{5.5cm}{0pt}}^{\text{p+q columns}} & \\
\begin{pmatrix}
 a_p & a_{p-1} & \cdots & \cdots & a_0 &  & &  \\
&\hspace{-,3cm}a_p & \hspace{-,6cm}a_{p-1} &  \cdots & \cdots & a_0 & &  \\
& & \hspace{-,7cm}\ddots & & & & \hspace{-,2cm}\ddots &  \\
&  &  & \hspace{-,9cm}a_p & \hspace{-,7cm}a_{p-1} &  \cdots & \cdots  & a_0 \\
b_q & b_{q-1} & \cdots & & \hspace{-,9cm}b_0 &  &  \\
&\hspace{-,3cm}b_q & \hspace{-,4cm}b_{q-1} & \cdots & \hspace{2mm}b_0 & &  &  \\
& & \hspace{-,5cm}\ddots & & & \hspace{-,2cm}\ddots & & \\
&  & & \hspace{-,7cm}\ddots &  & & \ddots & \\
& & & & \hspace{-,9cm}b_q & \hspace{-,6cm}b_{q-1} &\hdotsfor{1} & \hspace{2mm}b_0
\end{pmatrix}
&
 \hspace{-,4cm}\begin{array}{c} 
      \left. \rule{0pt}{0.9cm} \right\} \text{\footnotesize q rows}  \\
     \left. \rule{0pt}{1.15cm} \right\} \text{\footnotesize p rows}  
   \end{array} 
\end{array}$
\end{small}
\end{center}

\begin{definition}\label{def:sub-resultant}(\cite[\S 3]{Kahoui03}). 
  For $i=0,\ldots, \min(q,p-1)$, let $Sylv_i(P,Q)$ be the $(p+q-2i)\times
  (p+q-i)$ matrix obtained from $Sylv(P,Q)$ by deleting the $i$ last rows of the
  coefficients of $P$, the $i$ last rows of the coefficients of $Q$, and the $i$
  last columns.
 
  For $i=0,\ldots, \min(q,p-1)$, the $i$-th polynomial subresultant of $P$ and
  $Q$, denoted by $Sres_{Y,i}(P,Q)$ is the polynomial determinant of
  $Sylv_i(P,Q)$.  When $q=p$, the $q$-th polynomial subresultant of $P$ and $Q$
  is $b_q^{-1}Q$.%
  \footnote{It can be observed that, when $p>q$, the $q$-th subresultant is
    equal to $b_q^{p-q-1}Q$, however it is not defined when $p=q$.  In this
    case, following El Kahoui, we extend the definition to $b_q^{-1}Q$ assuming
    that the domain $\D$ is integral, which is the case in this paper.  Note
    that it is important to define the $q$-th subresultant to be a multiple of
    $Q$ so that Lemma~\ref{lem:fund-prop-subres} holds when $Q(\alpha,Y)$ is of
    degree $q$ and divides $P(\alpha,Y)$ for some $\alpha$.}
\end{definition}

$Sres_{Y,i}(P,Q)$ has degree at most $i$ in $Y$, and the coefficient of its
monomial of degree $i$ in $Y$, denoted by ${sres}_{Y,i} (P, Q)$, is called the
$i$-th \emph{principal subresultant coefficient}.  Note that ${Sres}_{Y,0} (P,
Q)={sres}_{Y,0} (P, Q)$ is the \emph{resultant} of $P$ and $Q$ with respect to
$Y$, which we also denote by $Res_Y(P,Q)$.  Furthermore, the first (with respect
to increasing $i$) nonzero subresultant of $P,Q \in \mathbb{D}[Y]$ is equal to
their gcd in $\F_\mathbb{D}[Y]$, up to a multiplicative factor in
$\F_\mathbb{D}$, where $\F_\mathbb{D}$ is the fraction field of $\D$ (e.g., if
$\D=\Z[X]$, then $\F_\mathbb{D}=\Q(X)$, the field of fractions of polynomials in
$\Q[X]$);
more generally, the subresultants of $P$ and $Q$ are equal to either $0$ or to
polynomials in the remainder sequence of $P$ and $Q$ in Euclid's algorithm (up
to multiplicative factors in $\D$) \cite[\S8.3.3 \&
Cor. 8.32]{BPR06}.\footnote{For efficiency, the computation of subresultant
  sequences are usually performed by computing the polynomial remainder
  sequences using some variants of Euclid algorithm instead of the
  aforementioned determinants.}

We state below a fundamental property of subresultants which is instrumental in
the triangular decomposition algorithm used in Section~\ref{sec:tri-dec}. For
clarity, we state this property for bivariate polynomials $P =\sum_{i=0}^p a_i
Y^i$ and $Q= \sum_{i = 0}^q b_i Y^i$ in $\D[X,Y]$, with $p\geq q$.
Note that this property is often stated with a stronger assumption that is that
\emph{none} of the leading terms $a_p(\alpha)$ and $b_q(\alpha)$ vanishes.  This
property is a direct consequence of the specialization property of subresultants
and of the gap structure theorem; see for instance \cite[Lemmas 2.3, 3.1 and
Corollary 5.1]{Kahoui03}.

\begin{lemma}\label{lem:fund-prop-subres}
%
  For any $\alpha$ such that $a_p(\alpha)$ and $b_q(\alpha)$ do not both vanish,
  the first ${Sres}_{Y,k}(P,Q)(\alpha,Y)$ (for $k$ increasing) that does not
  identically vanish is of degree $k$ and it is the gcd of $P(\alpha,Y)$ and
  $Q(\alpha,Y)$ (up to a nonzero constant in the fraction field of
  $\D(\alpha)$).
\end{lemma}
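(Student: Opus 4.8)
The plan is to deduce the statement from two classical facts about subresultants: the \emph{specialization property}, which controls how the polynomials $Sres_{Y,k}(P,Q)$ behave under the evaluation homomorphism $X\mapsto\alpha$, and the \emph{gap structure theorem}, which describes the (possibly defective) degree pattern of a subresultant sequence over a field and identifies its first non-vanishing element with a gcd. Set $A=P(\alpha,Y)$ and $B=Q(\alpha,Y)$, and let $\F$ be the fraction field of $\D(\alpha)$ (so $\F=\Q(\alpha)$ when $\D=\Z[X]$); the hypothesis that $a_p(\alpha)$ and $b_q(\alpha)$ do not both vanish means precisely that $\deg_Y A=p$ or $\deg_Y B=q$, i.e.\ at least one of the two polynomials keeps its formal degree after specialization.

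First I would reduce the claim to a statement over the field $\F$. By the specialization property of subresultants, in the refined form valid under exactly this hypothesis (\cite[Lemma~2.3]{Kahoui03}), for every $k$ the evaluation $Sres_{Y,k}(P,Q)(\alpha,Y)$ equals, up to a nonzero constant of $\F$, the $k$-th subresultant of $A$ and $B$ formed over $\F$ with respect to the formal degrees $p,q$. One argues by splitting according to whether $a_p(\alpha)\neq0$ or $b_q(\alpha)\neq0$ (in the latter case exchanging the roles of the two blocks of rows of the Sylvester matrix); in each case the specialization identities contribute only powers of $a_p(\alpha)$ or of $b_q(\alpha)$ as multiplicative constants. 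The point requiring care --- and the one I expect to be the main obstacle --- is the precise bookkeeping of which powers of $a_p(\alpha)$ and $b_q(\alpha)$ occur, together with the verification that the hypothesis is exactly what keeps all of these constants nonzero.

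Next, working over the field $\F$, I would invoke the gap structure theorem for the subresultant sequence of $A$ and $B$ (see \cite[\S8.3.3 \& Cor.~8.32]{BPR06}, or \cite[Lemma~3.1 and Cor.~5.1]{Kahoui03}): each $Sres_{Y,k}(A,B)$ is either identically zero or of degree exactly $k$ in $Y$; the non-vanishing ones are, up to nonzero factors in $\F$, the successive remainders produced by Euclid's algorithm on $A$ and $B$; and every degree drop in that remainder sequence is recorded by a block of vanishing subresultants. Consequently, the first (for $k$ increasing) $Sres_{Y,k}(A,B)$ that is not identically zero has degree exactly $k$ and is a nonzero $\F$-multiple of the last nonzero Euclidean remainder, that is, of $\gcd(A,B)$.

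Combining the two steps yields the lemma: the first $Sres_{Y,k}(P,Q)(\alpha,Y)$ that does not vanish identically has degree $k$ and equals $\gcd\bigl(P(\alpha,Y),Q(\alpha,Y)\bigr)$ up to a nonzero element of $\F$. Finally I would check the boundary cases: if $A$ and $B$ are coprime the first non-vanishing subresultant is $Sres_{Y,0}(P,Q)(\alpha,Y)=Res_Y(P,Q)(\alpha)$, a nonzero constant, consistent with $\gcd(A,B)$ being a unit of $\F$; and if $\deg_Y B=q$ with $B$ dividing $A$, the relevant term is the $q$-th one, where one must use the convention of Definition~\ref{def:sub-resultant} that $Sres_{Y,q}(P,Q)=b_q^{-1}Q$ (well defined since $b_q(\alpha)\neq0$ in this case), which specializes to a nonzero $\F$-multiple of $B=\gcd(A,B)$, again as claimed.
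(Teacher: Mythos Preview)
Your proposal is correct and follows precisely the approach indicated by the paper, which does not give a detailed proof but simply states that the lemma ``is a direct consequence of the specialization property of subresultants and of the gap structure theorem; see for instance \cite[Lemmas 2.3, 3.1 and Corollary 5.1]{Kahoui03}.'' Your write-up fleshes out exactly this route, with the same citations, and additionally handles the boundary case noted in the paper's footnote about the $q$-th subresultant.
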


\subsection{Complexity}
We recall complexity results, using fast algorithms, on subresultants and gcd
computations.  We also analyze complexities related to the computation of the
``sheared'' polynomials and their resultant.

\begin{lemma}[{\cite[Proposition 8.46]{BPR06} \cite[\S 8, 
Algorithm 7.3]{Reischert1997}}]
  \label{complexity:subresultant}
  Let $P$ and $Q$ in $\mathbb{Z}[X_1,\ldots, X_n][Y] $ of
  coefficient bitsize $\tau$ such that their degrees in $Y$ are bounded by
  $d_{Y}$ and their degrees in the other variables are bounded by $d$. 
  \begin{itemize}\cramped
  \item The coefficients of $Sres_{Y,i}(P,Q)$ have bitsize in
    $\sO(d_{Y}\tau)$. 
  \item The degree in $X_j$ of $Sres_{Y,i}(P,Q)$ is at most
    $2d(d_{Y}-i)$. 
  \item Any subresultants $Sres_{Y,i}(P,Q)$ can be computed in
    $\sO(d^{n} d_{Y}^{n+1})$ arithmetic operations, 
    and $\sOB(d^{n}
    d_{Y}^{n+2}\tau)$ bit operations. 
\end{itemize}
\end{lemma}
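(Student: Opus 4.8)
The plan is to prove the three bounds in Lemma~\ref{complexity:subresultant} by exhibiting each subresultant as a polynomial determinant of an explicit submatrix of the Sylvester matrix, and then tracking sizes through the classical determinant/Hadamard estimates and through a fast evaluation--interpolation scheme. First I would set up the data: the entries of $\op{Sylv}(P,Q)$ are the coefficients $a_i,b_i\in\Z[X_1,\dots,X_n]$, each of degree at most $d$ in every $X_j$ and of bitsize at most $\tau$. By Definition~\ref{def:sub-resultant}, $Sres_{Y,i}(P,Q)$ is the polynomial determinant of $Sylv_i(P,Q)$, whose relevant minors $\det(M_\ell)$ are determinants of square matrices of size $p+q-2i\le 2d_Y$ with entries among the $a_i,b_i$.

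For the degree bound, I would argue that each minor $\det(M_\ell)$ is a sum of products of $p+q-2i$ entries, and among those rows exactly $q-i$ come from coefficients of $P$ and $p-i$ from coefficients of $Q$; hence the degree in $X_j$ of any such product, and therefore of $Sres_{Y,i}$, is at most $d\bigl((q-i)+(p-i)\bigr)$. With $p,q\le d_Y$ this is at most $2d(d_Y-i)$, giving the second item. For the bitsize, I would combine the fact that $\det(M_\ell)$ is a signed sum of $(p+q-2i)!$ monomials-in-entries with the observation that multiplying $k$ integer polynomials each of bitsize $\tau$ and degree $\le D$ in each of $n$ variables yields a polynomial of bitsize $\sO(k\tau + \text{(a polylog term in the degree)})$; since $k=p+q-2i=\OO(d_Y)$, the resulting bitsize is $\sO(d_Y\tau)$, the first item. (The factor $\log((p+q-2i)!)=\sO(d_Y\log d_Y)$ is absorbed in the $\sO$.)

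For the complexity of computing a single $Sres_{Y,i}(P,Q)$, I would not expand the determinant directly but use evaluation--interpolation on $X_1,\dots,X_n$: the target polynomial has degree $\OO(d\,d_Y)$ in each $X_j$, so it is determined by its values on a grid of $\OO((d\,d_Y)^n)$ points; at each point the coefficients become elements of $\Z$ (or $\Z_\mu$), and one subresultant of two univariate polynomials of degree $\le d_Y$ is computed in $\sO(d_Y)$ ring operations by the fast (half-gcd based) subresultant algorithm. This gives $\sO((d\,d_Y)^n d_Y)=\sO(d^n d_Y^{n+1})$ arithmetic operations; interpolation adds only polylogarithmic overhead. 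The bit-complexity statement then follows by multiplying by the bit-cost of each arithmetic operation, which by the bitsize bound just proven is $\sOB(d_Y\tau)$ per operation, yielding $\sOB(d^n d_Y^{n+2}\tau)$. One has to be a little careful that the evaluation points do not make the leading coefficients vanish, which is handled by taking slightly more points than strictly necessary and discarding degenerate ones, or by a standard genericity/shift argument; this is the main technical nuisance rather than a genuine obstacle. The actual reference computations are carried out in \cite[Proposition 8.46]{BPR06} and \cite[Algorithm 7.3]{Reischert1997}, so I would appeal to those for the fast subresultant subroutine and only assemble the multivariate bookkeeping here.
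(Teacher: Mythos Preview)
The paper does not prove this lemma at all: it is stated with citations to \cite[Proposition~8.46]{BPR06} and \cite[\S8, Algorithm~7.3]{Reischert1997} and used as a black box. So there is no ``paper's own proof'' to compare against; your sketch is essentially a reconstruction of what those references do, namely expanding the minors of $Sylv_i(P,Q)$ for the degree and bitsize bounds, and evaluation--interpolation combined with a fast univariate subresultant routine for the complexity bound. That outline is correct and is the standard argument.

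One point deserves a bit more care. When you write that the bit complexity follows by ``multiplying by the bit-cost of each arithmetic operation, which by the bitsize bound just proven is $\sOB(d_Y\tau)$ per operation'', you are implicitly assuming that every intermediate integer arising in the evaluation--interpolation scheme has bitsize $\sO(d_Y\tau)$. But evaluating a polynomial of degree $d$ in each of $n$ variables at integer points of size $\OO(d d_Y)$ inflates the coefficient bitsize by an additive $\OO(nd\log(dd_Y))$, and the univariate subresultant at that point then has bitsize $\sO(d_Y\tau + d_Y d)$ rather than $\sO(d_Y\tau)$; for $\tau$ small this is not absorbed. The clean fix (and what the references actually do) is to run the whole scheme modulo $\sO(d_Y\tau)$ word-size primes and reconstruct by CRT: each modular run costs $\sO(d^n d_Y^{n+1})$ word operations, and summing over the primes gives the stated $\sOB(d^n d_Y^{n+2}\tau)$. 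With that adjustment your argument is complete.
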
    

In the sequel, we often consider the gcd of two univariate polynomials $P$ and
$Q$ and the gcd-free part of $P$ with respect to $Q$, that is, the divisor $D$
of $P$ such that $P=\gcd(P,Q)D$. Note that when $Q=P'$, the latter is the
squarefree part $\overline{P}$.

\begin{lemma}[{\cite[Remark 10.19]{BPR06}}]
\label{complexity:gcd}
  
Let $P$ and $Q$ in $\F[X]$ of degree at most $d$. $\gcd(P,Q)$ or the gcd-free
part of $P$ with respect to $Q$ can be computed with $\sO(d)$ operations
in~$\F$.
\end{lemma}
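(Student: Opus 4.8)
The plan is to invoke the fast, half-gcd-based Euclidean algorithm. Recall that the classical Euclidean algorithm on polynomials of degree at most $d$ performs a chain of Euclidean divisions whose quotient degrees sum to at most $d$; carried out directly this costs $O(d^2)$ operations in $\F$. The \emph{half-gcd} accelerates this by divide-and-conquer: to process a pair $(P,Q)$ with $d_X(P)\geq d_X(Q)$, one recursively computes, from the top halves of the coefficient sequences, the unimodular $2\times 2$ matrix over $\F[X]$ collecting the quotients that reduce the pair to one whose larger degree has dropped by roughly $d/2$; one then applies this matrix to $(P,Q)$ (a polynomial multiplication), performs one ``middle'' division step, and recurses once more on the resulting lower-degree pair.

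First I would set up the cost recurrence. Write $T(d)$ for the cost of the half-gcd on inputs of degree at most $d$, and $\MM(d)$ for the cost of multiplying two polynomials of degree at most $d$ over $\F$. The scheme above yields $T(d)\leq 2\,T(d/2)+O(\MM(d))$, and with an FFT-based multiplication one has $\MM(d)=\sO(d)$, so the recurrence unrolls to $T(d)=O(\MM(d)\log d)=\sO(d)$. From the output of the half-gcd one recovers, with $\sO(d)$ further operations, the transition matrix of the complete remainder sequence, and in particular a unimodular matrix $U$ over $\F[X]$ such that one entry of $U\,(P,Q)^{t}$ equals $\gcd(P,Q)$; hence $\gcd(P,Q)$ is computed in $\sO(d)$ operations in $\F$.

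For the gcd-free part, i.e.\ the divisor $D$ of $P$ with $P=\gcd(P,Q)\,D$, I would simply perform the exact division $D=P/\gcd(P,Q)$ of polynomials of degree at most $d$; using fast division (Newton iteration on the reversed polynomials) this costs $O(\MM(d))=\sO(d)$ operations in $\F$. Adding the two contributions still gives $\sO(d)$ operations in~$\F$.

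The one genuinely delicate point is the degree bookkeeping inside the recursion: the degree drops along the remainder sequence need not all be $1$, so the recursion must be phrased in terms of a prescribed amount of degree reduction rather than a fixed number of division steps, and the ``abnormal'' case of a large degree gap at the splitting point must be handled by an extra single division before recursing. This is entirely classical, and the details may be found in \cite[Remark 10.19]{BPR06} and the references therein.
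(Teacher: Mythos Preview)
Your sketch is correct and recounts the standard half-gcd argument; in the paper, however, this lemma is not proved at all but simply quoted from \cite[Remark 10.19]{BPR06} as a known complexity bound. So there is nothing to compare: you supplied a proof where the paper intentionally deferred to the literature.
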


\begin{lemma}\label{lem:complexity:shear}
  Let $P$ and $Q$ in $\Z[X,Y]$ be of total degree at most $d$ and maximum bitsize $\tau$. The sheared
  polynomials $P(T-SY,Y)$ and $Q(T-SY,Y)$ can be expanded in $\sOB(d^4+d^3\tau)$ and their bitsizes
  are in $\sO(d+\tau)$. The resultant $R(T,S)$ can be computed in $\sOB(d^7+d^6\tau)$ bit operations
  and $\sO(d^5)$ arithmetic operations in $\Z$; its degree is at most $2d^2$ in each variable and
  its bitsize is in $\sO(d^2+d\tau)$.
\end{lemma}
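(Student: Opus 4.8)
The plan is to analyze each of the four claims in turn: (a) the cost of expanding the sheared polynomials $P(T-SY,Y)$ and $Q(T-SY,Y)$ and the bound on their bitsize; (b) the cost of computing the resultant $R(T,S)$, both in bit operations and in arithmetic operations over $\Z$; (c) the degree bound on $R(T,S)$; and (d) its bitsize bound. Throughout, the two governing quantities are that $P,Q$ have at most $O(d^2)$ monomials, total degree $\le d$, and coefficients of bitsize $\le\tau$.

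First I would treat the shear. Writing $P=\sum_{i+j\le d} c_{ij}X^iY^j$, substituting $X=T-SY$ gives $P(T-SY,Y)=\sum_{i+j\le d} c_{ij}(T-SY)^iY^j$. Expanding $(T-SY)^i$ by the binomial theorem produces coefficients that are signed binomial numbers $\binom{i}{k}$, each of bitsize $O(i)=O(d)$; multiplying by $c_{ij}$ gives bitsize $O(d+\tau)$, and summing the $O(d^2)$ contributions only adds $O(\log d)$, so the bitsize of $P(T-SY,Y)$ (and of $Q(T-SY,Y)$) is indeed in $\sO(d+\tau)$. For the running time: there are $O(d^2)$ monomials, each requiring the expansion of a power $(T-SY)^i$ with $O(d)$ terms and one multiplication by an integer of bitsize $\tau$; each such integer multiplication costs $\sOB(d+\tau)$, so the naive bound is $\sOB(d^2\cdot d\cdot(d+\tau))=\sOB(d^4+d^3\tau)$. (One may also precompute the powers $(T-SY)^i$ incrementally, but the crude count already gives the stated bound.)

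Next I would bound the resultant. Viewing $A:=P(T-SY,Y)$ and $B:=Q(T-SY,Y)$ as polynomials in $Y$ with coefficients in $\Z[T,S]$: their degrees in $Y$ are $\le d$, and each coefficient is a polynomial in $T,S$ of total degree $\le d$ and bitsize $\sO(d+\tau)$ by part (a). The resultant $\mathrm{Res}_Y(A,B)$ is, up to sign, the determinant of the $(\le 2d)\times(\le 2d)$ Sylvester matrix, hence a sum of products of $2d$ entries; its degree in each of $T,S$ is at most $(2d)\cdot d=2d^2$, giving the claimed degree bound, and its bitsize is at most $2d$ times the entry bitsize plus $\log((2d)!)$, i.e. $\sO(d)\cdot\sO(d+\tau)=\sO(d^2+d\tau)$. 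For the complexity, I would invoke Lemma~\ref{complexity:subresultant} with $n=2$ (the two parameter variables $T,S$), $d_Y=O(d)$, the parameter degree $O(d)$, and coefficient bitsize $\sO(d+\tau)$: the resultant is the $0$-th subresultant, computable in $\sO(d^{n}d_Y^{n+1})=\sO(d^2\cdot d^3)=\sO(d^5)$ arithmetic operations over $\Z$, and in $\sOB(d^{n}d_Y^{n+2}(d+\tau))=\sOB(d^2\cdot d^4\cdot(d+\tau))=\sOB(d^7+d^6\tau)$ bit operations. One must be slightly careful that the cited lemma is stated for integer coefficients of bitsize $\tau$ whereas here the coefficients already have bitsize $\sO(d+\tau)$, so the $\tau$ in its bit-complexity bound is replaced by $\sO(d+\tau)$; this substitution is exactly what produces the extra $d^7$ term.

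The main obstacle is the bookkeeping in part (b): one must correctly instantiate the multivariate subresultant complexity lemma, keeping track of the fact that the ``input'' here is the pair of \emph{sheared} polynomials (with inflated bitsize $\sO(d+\tau)$ and the extra variable $S$), rather than the original $P,Q$. Getting the right power of $d$ hinges on this, and it is also the point flagged in the excerpt's footnote as having been overlooked elsewhere. The degree and bitsize bounds, and the shear cost, are routine once the binomial-expansion estimates are written out.
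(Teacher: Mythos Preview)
Your proposal is correct and follows essentially the same approach as the paper: expand the sheared polynomials via the binomial formula to get the $\sO(d+\tau)$ bitsize and $\sOB(d^4+d^3\tau)$ cost, then invoke Lemma~\ref{complexity:subresultant} on the trivariate polynomials $P(T-SY,Y),Q(T-SY,Y)$ (with $n=2$, degrees $\le d$ in each variable, and bitsize $\sO(d+\tau)$) to obtain all the claims on $R(T,S)$. Your explicit Sylvester-determinant argument for the degree and bitsize of $R(T,S)$ and your remark that $\tau$ must be replaced by $\sO(d+\tau)$ in the subresultant lemma are slightly more detailed than the paper, which simply states that Lemma~\ref{complexity:subresultant} ``implies the claims on $R(T,S)$.''
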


\begin{proof}
Writing $P$ as $\sum_{i=0}^d{p_i(Y)X^i}$, expending the substitution of $X$ by
  $T-SY$ needs the computation of the successive powers $(T-SY)^i$ for $i$ from
  1 to $d$.  The binomial formula shows that each polynomial $(T-SY)^i$ is the
  sum of $i+1$ monomials, with coefficients of bitsize in $O(i\log{i})$. Using
  the recursion formula $(T-SY)^i=(T-SY)^{i-1}(T-SY)$, given the polynomial
  $(T-SY)^{i-1}$, the computation of $(T-SY)^i$ requires $2i$ multiplications of
  coefficients having bitsize in $O(i\log{i})$, which can be done in
  $\sO_B(i^2\log{i})$ bit operations. The complexity of computing all the powers
  is thus in $\sO_B(d^3\log{d})$.
  The second step is to multiply $p_i(Y)$ by $(T-SY)^i$ for $i=1, \ldots,
  d$. Each polynomial multiplication can be done with $O(d^2)$ multiplications
  of integers of bitsize in $O(\tau)$ or in $O(d\log d)$, and thus it can be
  done in $\sO_B(d^2(\tau+d\log{d}))$ bit operations and yields polynomials of
  bitsize $\OO(\tau+d\log{d})$. For the $d$ multiplications the total cost is in
  $\sO_B(d^3(\tau+d\log{d}))$. Consequently the computation of ${P}(T-SY,Y)$ and
  ${Q}(T-SY,Y)$ can be done in $\sOB(d^3(\tau+d))$ bit operations and these
  polynomials have bitsize in $\sO(\tau+d)$. 
  In addition, since ${P}(T-SY,Y)$ and ${Q}(T-SY,Y)$ are trivariate polynomials of partial degree in
  all variables bounded by $d$, Lemma~\ref{complexity:subresultant} implies the claims on $R(T,S)$.
\end{proof}

\section{Separating linear form}\label{sec:sepelt}

Throughout this section, we assume that the two input polynomials $P$ and $Q$
are coprime in $\Z[X,Y]$, that they define the ideal $I$, that their
  maximum total degree $d$ is at least 2 and that their coefficients have
  maximum bitsize $\tau$. Note that the coprimality of $P$ and $Q$ is implicitly tested during Algorithm
\ref{alg:sep-elem} because they are coprime  if and only if  $R(T,S)$  does  not identically
vanish.
By abuse of notation, some complexity $\sOB(d^k)$ may
  refer to a complexity in which polylogarithmic factors in $d$ \emph{and in
    $\tau$} are omitted.
$I_\mu=\ideal{P_\mu,Q_\mu}$ denotes the ideal generated by $P_\mu=\phi_\mu(P)$
and $Q_\mu=\phi_\mu(Q)$. Similarly as in Equation~\eqref{eq0}, we define
$R_\mu(T,S)$ as the resultant of $P_\mu(T-SY,Y)$ and $Q_\mu(T-SY,Y)$ with
respect to $Y$, and we define $L_{P_\mu}(S)$ and $L_{Q_\mu}(S)$
similarly as in \eqref{eq1}.
We refer to the overview in Section~\ref{sec:overview} for the organization
of this section.

\subsection{Separating linear form over $\Z_\mu$ versus $\Z$}
\label{sec:sep-modulo}

We first introduce the notion of lucky prime numbers $\mu$ which are, roughly
speaking, primes $\mu$ for which the number of distinct solutions of
$\ideal{P,Q}$ does not change when considering the polynomials modulo $\mu$. We
then show the critical property that, 
if a linear form is separating modulo such a $\mu$, then it is also separating over
$\Z$. 

\begin{definition}\label{def:lucky-nb-roots}
  A prime number $\mu$ is said to be {\em \bf lucky} for an ideal $I=\ideal{P,Q}$ if it is
  larger than $2d^4$ and satisfies
\[\phi_\mu(L_P(S))\ \phi_\mu(L_Q(S))\not\equiv 0 \quad\mbox{and}\quad 
\#V(I)=\#V(I_\mu).\]
\end{definition}

\begin{proposition}\label{prop:separation-correspondence}
  Let $\mu$ be a lucky prime for the ideal $I=\ideal{P,Q}$ and let $a<\mu$ be
  an integer\footnote{We assume $a<\mu$ for clarity so that the linear form  $X+aY$ is ``identical''
  in $\Z$ and in $\Z_\mu$. This  hypothesis is however  not needed and we actually prove that if $X+\phi_\mu(a)Y$ separates
$V(I_\mu)$, then $X+aY$ separates $V(I)$.} such that
$\phi_\mu(L_P(a))\ \phi_\mu(L_Q(a))\neq 0.$
If $X+aY$ separates
$V(I_\mu)$, it also separates $V(I)$.
\end{proposition}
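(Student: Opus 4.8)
The plan is to compare the univariate polynomials obtained by specializing $S=a$ in the two resultants $R(T,S)$ and $R_\mu(T,S)$, and to count their distinct roots. First, observe that the hypothesis $\phi_\mu(L_P(a))\,\phi_\mu(L_Q(a))\neq 0$ together with luckiness of $\mu$ (which guarantees $\phi_\mu(L_P(S))\not\equiv 0$ and $\phi_\mu(L_Q(S))\not\equiv 0$) implies that reduction modulo $\mu$ commutes with taking the leading coefficient in $Y$ of the sheared polynomials, so that $L_{P_\mu}(a)=\phi_\mu(L_P(a))\neq 0$ and $L_{Q_\mu}(a)=\phi_\mu(L_Q(a))\neq 0$ in $\Z_\mu$, while $L_P(a)\neq 0$ and $L_Q(a)\neq 0$ in $\Z$. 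In particular the degrees in $Y$ of $P(T-aY,Y)$ and $Q(T-aY,Y)$ are left unchanged by $\phi_\mu$, so that the resultant commutes with reduction and $\phi_\mu(R(T,a))=R_\mu(T,a)$. Note also that, since $I$ is zero-dimensional and $\#V(I_\mu)=\#V(I)$ by luckiness, $I_\mu$ is zero-dimensional as well, so $P_\mu$ and $Q_\mu$ are coprime and $R_\mu(T,S)\not\equiv 0$; hence $R(T,a)\neq 0$.

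Second, I would isolate the following elementary univariate fact: for every $f\in\Z[T]$ with $\phi_\mu(f)\neq 0$, the number of distinct roots of $\phi_\mu(f)$ in $\overline{\Z_\mu}$ is at most the number of distinct roots of $f$ in $\overline{\Q}$. To see this, write $f=\overline{f}\,h$ with $\overline{f}\in\Z[T]$ the primitive squarefree part of $f$ and $h\in\Z[T]$. Since $f$ and $\overline{f}$ have the same roots and $\overline{f}$ is squarefree, every irreducible factor of $h$ is an irreducible factor of $\overline{f}$, so $h$ divides a power of $\overline{f}$ in $\Q[T]$ and, by Gauss's lemma, its primitive part divides a power of $\overline{f}$ in $\Z[T]$. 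As $\phi_\mu(f)\neq 0$ forces $\mu$ not to divide the content of $h$, reducing modulo $\mu$ shows that every root of $\phi_\mu(f)=\phi_\mu(\overline{f})\,\phi_\mu(h)$ is a root of $\phi_\mu(\overline{f})$; hence $\phi_\mu(f)$ has at most $\deg\phi_\mu(\overline{f})\leq\deg\overline{f}$ distinct roots, and $\deg\overline{f}$ is the number of distinct roots of $f$.

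The conclusion then follows from a chain of (in)equalities. Since $X+aY$ separates $V(I_\mu)$ and $L_{P_\mu}(a)\,L_{Q_\mu}(a)\neq 0$, Lemma~\ref{lem:ineq1} applied over $\Z_\mu$ gives that $R_\mu(T,a)$ has exactly $\#V(I_\mu)$ distinct roots, and $\#V(I_\mu)=\#V(I)$ by luckiness. Combining this with $R_\mu(T,a)=\phi_\mu(R(T,a))$, with the univariate fact above, and with Lemma~\ref{lem:ineq1} over $\Z$ (applicable since $L_P(a)\,L_Q(a)\neq 0$), one gets
\begin{align*}
\#V(I)=\#V(I_\mu)&=\#\{\text{distinct roots of }R_\mu(T,a)\}=\#\{\text{distinct roots of }\phi_\mu(R(T,a))\}\\
&\leq\#\{\text{distinct roots of }R(T,a)\}\leq\#V(I).
\end{align*}
Hence $R(T,a)$ has exactly $\#V(I)$ distinct roots. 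Since $L_P(a)\,L_Q(a)\neq 0$, these roots are precisely the values $x+ay$ for $(x,y)\in V(I)$, so having $\#V(I)$ of them forces the map $(x,y)\mapsto x+ay$ to be injective on $V(I)$, i.e.\ $X+aY$ separates $V(I)$.

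The step I expect to be the real obstacle is the univariate fact that reduction modulo $\mu$ cannot increase the number of distinct roots: it is intuitively clear, but a careful proof must handle the primitive-part/content bookkeeping (and the degenerate case where a content vanishes modulo $\mu$, which does not arise here precisely because $R_\mu(T,a)\neq 0$). The other point requiring care is the commutation $\phi_\mu(R(T,a))=R_\mu(T,a)$, which is exactly where the hypothesis $\phi_\mu(L_P(a))\,\phi_\mu(L_Q(a))\neq 0$ enters, ruling out a drop of the degree in $Y$ under reduction.
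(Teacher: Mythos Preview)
Your argument is correct and follows essentially the same route as the paper: establish the commutation $\phi_\mu(R(T,a))=R_\mu(T,a)$, prove that reduction modulo $\mu$ cannot increase the number of distinct roots of a univariate integer polynomial, and then chain Lemma~\ref{lem:ineq1} over $\Z_\mu$ and over $\Z$ with the equality $\#V(I_\mu)=\#V(I)$ from luckiness. The paper packages the first two steps as a separate Lemma~\ref{lem:ineq2} and proves the univariate fact via the full squarefree decomposition $f=c\prod f_i^{m_i}$ rather than your $f=\overline{f}\,h$ argument with Gauss's lemma, but the content is the same.
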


The key idea of the proof of Proposition \ref{prop:separation-correspondence},
as well as Propositions \ref{prop:lessroot} and \ref{prop:luckyroot}, is to
prove the following inequalities (under the hypothesis that various leading
terms do not vanish)
\begin{equation}\label{eq:18a}
\#V(I_\mu)\geq d_T(\overline{R_\mu(T,a)})\leq d_T(\overline{R(T,a)})\leq \#V(I)
\end{equation}
and argue that the first (resp. last) one is an
equality if $X+aY$  separates $V(I_\mu)$ (resp. $V(I)$).
We establish these claims in Lemmas~\ref{lem:ineq1} and \ref{lem:ineq2}. 
As mentioned in Section~\ref{sec:overview}, Lemma~\ref{lem:ineq1} is the key property in the classical algorithm for computing a separating
form for $I$, which algorithm we   will  use over $\Z_\mu$ to compute a separating form  for
$I_\mu$ in Section~\ref{sec:gathering}. For completeness, we outline its proof (see \cite[Lemma
16]{det-jsc-2009} or \cite[Proposition 11.23]{BPR06} for details).
Recall that $P$ and $Q$ are assumed to be  coprime but not $P_\mu$ and $Q_\mu$.


\begin{lemma}\label{lem:ineq1}
  If $a\in\Z$ is such that $L_{P}(a)\, L_{Q}(a)\neq 0$ then
  $d_T(\overline{R(T,a)})\leq \#V(I)$ and they are equal if and only if $X+aY$
  separates $V(I)$. The same holds over $\Z_\mu$, that is for $P_\mu$, $Q_\mu$,
  $R_\mu$ and $I_\mu$, provided $P_\mu$ and $Q_\mu$ are coprime.
%
\end{lemma}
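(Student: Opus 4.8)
The plan is to relate the degree of the squarefree part of $R(T,a)$ to the structure of the variety $V(I)$ via the resultant's factorization. The key classical fact is that, under the hypothesis $L_P(a)L_Q(a)\neq 0$, the specialized sheared polynomials $P(T-aY,Y)$ and $Q(T-aY,Y)$ keep their degree in $Y$, so the resultant $R(T,a)$ specializes correctly: for each value $t_0$ of $T$, the multiplicity of $t_0$ as a root of $R(T,a)$ equals $\sum \mu_I(\sigma)$ summed over the solutions $\sigma=(x_0,y_0)\in V(I)$ lying on the line $x+ay=t_0$ (this uses the Poisson-type formula for resultants together with the fact that no solution escapes to infinity in the $Y$-direction after the shear, guaranteed by the non-vanishing of the leading coefficients). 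In particular $R(T,a)$ is not identically zero since $P,Q$ are coprime, and $d_T(R(T,a))=\sum_{\sigma\in V(I)}\mu_I(\sigma)$.

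From this, the number of distinct roots of $R(T,a)$, i.e. $d_T(\overline{R(T,a)})$, equals the number of distinct values taken by the linear form $x+ay$ on $V(I)$. First I would observe that the map $\sigma\mapsto x_\sigma+ay_\sigma$ from $V(I)$ to these values is surjective by construction, hence the number of distinct values is at most $\#V(I)$, giving $d_T(\overline{R(T,a)})\leq\#V(I)$. Equality holds if and only if this map is injective, which is exactly the statement that $X+aY$ separates $V(I)$. So the equivalence falls out immediately once the counting statement for distinct roots is in place.

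For the $\Z_\mu$ version, the same argument goes through verbatim provided $P_\mu$ and $Q_\mu$ are coprime in $\Z_\mu[X,Y]$ (so that $R_\mu(T,a)$ is not identically zero) and $L_{P_\mu}(a)L_{Q_\mu}(a)\neq 0$ (so that the shear preserves the $Y$-degrees and the specialization property of the resultant still applies over the field $\Z_\mu$). The only point needing care is that all the ingredients — the Poisson formula for resultants, the specialization property, multiplicities of points in a zero-dimensional ideal — are stated over an arbitrary field (or at least over $\Z_\mu$ with $\mu$ prime), which is standard; I would simply cite \cite[Proposition 11.23]{BPR06} and \cite[Lemma 16]{det-jsc-2009}, which handle exactly this.

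The main obstacle, and the only place where one must be slightly careful, is justifying that $d_T(R(T,a))$ counts solutions of $V(I)$ with multiplicity and that $R(T,a)\not\equiv 0$: this is where the hypothesis $L_P(a)L_Q(a)\neq 0$ is essential, because it prevents the leading coefficient $L_R(S)$ of $R(T,S)$ in $T$ from vanishing at $S=a$ — more precisely it guarantees that specialization of $S$ at $a$ commutes with taking the resultant in $Y$, which is the technical heart of the argument. Everything else is bookkeeping: translating "distinct roots of $R(T,a)$" into "distinct values of $X+aY$ on $V(I)$" and then into the separation property. Since the excerpt explicitly says the proof is only outlined, I would keep this short, state the specialization-of-resultant fact with a reference, derive the counting formula, and conclude with the injectivity/separation equivalence.
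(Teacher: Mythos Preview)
Your approach is essentially the same as the paper's: specialize the resultant at $S=a$ using $L_P(a)L_Q(a)\neq 0$, identify the roots of $R(T,a)$ with the values of $X+aY$ on $V(I)$, and conclude via the surjectivity/injectivity of $\sigma\mapsto x_\sigma+ay_\sigma$. The only difference is that you invoke the Poisson formula and intersection multiplicities to get the stronger statement $d_T(R(T,a))=\sum_\sigma\mu_I(\sigma)$, whereas the paper only needs (and only claims) the set-theoretic fact that the distinct roots of $R(T,a)$ are exactly the $T$-coordinates of the points of $V(I_a)$; your extra precision is correct but unnecessary for the lemma.
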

\begin{proof}
  Since $L_{P}(a)\ L_{Q}(a)\neq 0$, the resultant $R(T,S)$ can be specialized at
  $S=a$, that is $R(T,a)=Res_Y(P(T-aY,Y), Q(T-aY,Y))$. On the other hand, the
  sheared polynomials $P(T-aY,Y)$ and $Q(T-aY,Y)$ are coprime (since $P$ and $Q$
  are coprime) and since $L_{P}(a)\ L_{Q}(a)\neq 0$, they have no common
  solution at infinity in the $Y$-direction.  Thus the roots of their resultant
  with respect to $Y$ are the $T$-coordinates of the (affine) solutions of
  $I_a=\ideal{P(T-aY,Y), Q(T-aY,Y)}$ 
  (see for instance \cite[\S 3.6 Proposition 3]{CLO}).
  Hence, $d_T(\overline{R(T,a)})\leq \#V(I_a)=\#V(I)$. Moreover, if $X+aY$
  separates $V(I)$, $T=X+aY$ takes distinct values for every solution in $V(I)$,
  and since these values of $T$ are roots of $R(T,a)$,
  $d_T(\overline{R(T,a)})\geq\#V(I)$ and thus they are equal. Conversely, if
  $d_T(\overline{R(T,a)})=\#V(I)$, $R(T,a)$ admits $\#V(I)$ distinct roots
  $T=X+aY$ which means that $X+aY$ separates all the solutions of $V(I)$.  The
  same argument holds over $\Z_\mu$.
\end{proof}


The following lemma states a rather standard properties.  For
  completeness and readers' convenience, we provide a proof
  for which we could not find  accurate references.

%

\begin{lemma}\label{lem:ineq2}
Let $\mu$ be a  prime and   $a$ be an integer  such that 
$\phi_\mu(L_P(a))\ \phi_\mu(L_Q(a))\neq 0$, 
then $d_T(\overline{R_\mu(T,a)})\leq d_T(\overline{R(T,a)})$.
\end{lemma}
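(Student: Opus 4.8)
The goal is to compare the number of distinct roots of $R(T,a)$ and of its reduction $R_\mu(T,a)$, under the hypothesis that the leading coefficients $L_P(a)$ and $L_Q(a)$ do not vanish modulo $\mu$. The natural strategy is to track how the resultant and its squarefree part behave under the specialization $S=a$ and under reduction $\phi_\mu$, and then use a generic inequality between the degree of the squarefree part of a polynomial and that of its reduction modulo a prime.

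\medskip
\textbf{Step 1: commuting specialization and reduction with the resultant.}
First I would establish that $\phi_\mu(R(T,a)) = R_\mu(T,a)$ up to the technicalities of leading coefficients. The resultant is a polynomial (a determinant of the Sylvester matrix) in the coefficients of $P(T-SY,Y)$ and $Q(T-SY,Y)$; since $\phi_\mu$ is a ring homomorphism, $\phi_\mu$ of the resultant equals the resultant of the reductions \emph{provided the degrees in $Y$ do not drop}, i.e. provided the leading coefficients $L_P(S)$ and $L_Q(S)$ do not vanish modulo $\mu$ at the relevant specialization. Here we specialize $S=a$, and the hypothesis $\phi_\mu(L_P(a))\,\phi_\mu(L_Q(a))\neq 0$ guarantees exactly that the $Y$-degrees of $P_\mu(T-aY,Y)$ and $Q_\mu(T-aY,Y)$ match those of $P(T-aY,Y)$ and $Q(T-aY,Y)$. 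Combined with the fact (used already in Lemma~\ref{lem:ineq1}) that $R(T,a) = Res_Y(P(T-aY,Y),Q(T-aY,Y))$ since $L_P(a)L_Q(a)\neq 0$ over $\Z$ — note $\phi_\mu(L_P(a))\neq 0$ forces $L_P(a)\neq 0$ — this yields $\phi_\mu(R(T,a)) = R_\mu(T,a)$, and moreover $d_T(R_\mu(T,a)) = d_T(R(T,a))$, so no degree is lost in $T$ either.

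\medskip
\textbf{Step 2: the degree of the squarefree part can only drop under reduction.}
The remaining point is purely about univariate polynomials over $\Z$: if $F\in\Z[T]$ has the same degree as $\phi_\mu(F)$, then $d_T(\overline{\phi_\mu(F)}) \leq d_T(\overline{F})$, i.e. reducing modulo a prime can only merge distinct roots (possibly at infinity is excluded here since the degree is preserved), never split them. The clean way to see this: $\overline{F} = F/\gcd(F,F')$, and $\deg\overline{F} = \deg F - \deg\gcd(F,F')$. Under $\phi_\mu$, $\phi_\mu(F') = (\phi_\mu F)'$, and the degree of a gcd can only increase under reduction when top degrees are preserved — equivalently, $\gcd(F,F')$ divides some expression whose reduction is divisible by $\gcd(\phi_\mu F, (\phi_\mu F)')$; in matrix terms, the rank of the Sylvester-type matrix attached to $(F,F')$ can only drop modulo $\mu$, so the degree of the gcd, which is $\deg F$ minus that rank-like quantity, can only go up. Hence $d_T(\overline{R_\mu(T,a)}) = d_T(\overline{\phi_\mu(R(T,a))}) \leq d_T(\overline{R(T,a)})$, which is the claim.

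\medskip
\textbf{Main obstacle.}
The conceptually delicate part is Step~2 — making precise and correct the statement that ``$\deg\gcd$ does not decrease under reduction modulo a prime,'' since this is \emph{false} if the leading term drops (e.g. $F$ could gain a spurious common factor or lose degree entirely). The whole argument hinges on the degree-preservation secured in Step~1, so I would be careful to state Step~2 as: \emph{if $\deg_T \phi_\mu(F) = \deg_T F$ then $\deg_T \overline{\phi_\mu(F)} \leq \deg_T \overline{F}$}, and prove it via the subresultant/Sylvester-rank characterization of the gcd degree (the first nonzero principal subresultant coefficient), invoking the specialization behaviour of subresultants recalled in Section~\ref{sec:subres-prelim}. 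Once that is in place, the proof is a two-line assembly: $R_\mu(T,a) = \phi_\mu(R(T,a))$ with matching $T$-degree (Step~1), hence the squarefree-part degrees satisfy the desired inequality (Step~2).
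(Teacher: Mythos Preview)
Your Step~1 matches the paper's argument exactly: both show $\phi_\mu(R(T,a))=R_\mu(T,a)$ by invoking the specialization property of the resultant under the hypothesis that the leading coefficients in $Y$ survive modulo $\mu$.

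Your Step~2 is correct but takes a different route. You argue via $\overline{F}=F/\gcd(F,F')$ together with the fact that $\deg\gcd$ can only increase under $\phi_\mu$ when $\deg_T\phi_\mu(F)=\deg_T F$, justified through subresultant/Sylvester-rank specialization. The paper instead proves the unconditional inequality $\deg(\overline{\phi_\mu(f)})\leq\deg(\overline{f})$ for \emph{any} $f\in\Z[X]$ directly from the squarefree factorization $f=c\prod_i f_i^{m_i}$: applying $\phi_\mu$ gives $\phi_\mu(f)=\phi_\mu(c)\prod_i\phi_\mu(f_i)^{m_i}$, whence $\deg(\overline{\phi_\mu(f)})\leq\sum_i\deg\phi_\mu(f_i)\leq\sum_i\deg f_i=\deg\overline{f}$. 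This is more elementary (no subresultant machinery, no care about $(\phi_\mu F)'$ possibly dropping degree in positive characteristic) and, notably, it does \emph{not} require the degree-preservation you flag as the ``main obstacle'' --- so that obstacle is in fact a non-issue for the lemma itself, only for the particular proof strategy you chose. Your approach, on the other hand, fits naturally with the subresultant framework already set up in Section~\ref{sec:subres-prelim} and makes explicit the mechanism (gcd degree jumping) that later reappears in the proof of Proposition~\ref{prop:luckyroot}.
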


\begin{proof}
By hypothesis,
$\phi_\mu(L_P(S))$ and $\phi_\mu(L_Q(S))$ do not
  identically vanish, thus we can specialize the resultant $R$ by $\phi_\mu$, that
  is $\phi_\mu(R(T,S))=Res_Y(\phi_\mu(P(T-SY,Y)), \phi_\mu(Q(T-SY,Y)))$ \cite[Proposition 4.20]{BPR06}. Hence,
$\phi_\mu(R(T,S))=R_\mu(T,S)$. The evaluation at $S=a$ and the reduction modulo $\mu$
  commute (in $\Z_\mu$), thus  $\phi_\mu(R(T,a))=R_\mu(T,a)$  in $\Z_\mu[T]$. 

We  now show that for any polynomial $f \in \Z[X]$ and prime $\mu$, $\deg(\overline{\phi_\mu(f)})
\leq \deg(\overline{f})$, which will implie the lemma. 

Let $f=c\prod_{i}{f_i^{m_i}}$ be the squarefree decomposition of $f$ in $ \Z[X]$. Considering its
reduction modulo $\mu$, we obtain that $\phi_\mu(f)=\phi_\mu(c)\prod_{i}{\phi_\mu(f_i)^{m_i}}$.
Hence, $\deg(\overline{\phi_\mu(f)}) \leq \sum_i{\deg(\phi_\mu(f_i))}$. Furthermore, since $\deg(\phi_\mu(f_i)) \leq \deg(f_i)$, we have that $\deg(\overline{\phi_\mu(f)}) \leq
\sum_i{\deg(f_i)}$.
On the other hand, since $f=c\prod_{i}{f_i^{m_i}}$ is  the squarefree decomposition of $f$, we have
$\deg(\overline{f})=\sum_i{\deg(f_i)}$ so $\deg(\overline{\phi_\mu(f)}) \leq \deg(\overline{f})$.
\end{proof}

\begin{proof}[Proof of Proposition~\ref{prop:separation-correspondence}]
If $\mu$ is a lucky prime, then by definition $\#V(I)=\#V(I_\mu)$, thus $I_\mu$ is zero-dimensional since $I$
is. Thus, by Lemmas~\ref{lem:ineq1} and \ref{lem:ineq2}, if $\mu$ is a lucky prime and $a$ is an integer such that $X+aY$ separates
$V(I_\mu)$ and $\phi_\mu(L_P(a))\ \phi_\mu(L_Q(a))\neq 0$, then
\[\#V(I_\mu)= d_T(\overline{R_\mu(T,a)})\leq d_T(\overline{R(T,a)})\leq \#V(I).\]
Since $\mu$ is lucky, $\#V(I_\mu)= \#V(I)$ thus $d_T(\overline{R(T,a)})= \#V(I)$ and by Lemma~\ref{lem:ineq1}, $X+aY$ separates~$V(I)$.
\end{proof}

\subsection{Number of solutions  over $\Z_\mu$ versus $\Z$}\label{sec:lucky-nbroot}

As shown in Proposition \ref{prop:separation-correspondence}, the knowledge of a
lucky prime permits to search for separating linear forms over $\Z_\mu$ rather
than over $\Z$. We prove here two propositions that are critical for computing a
lucky prime, which state that the number of solutions of
$I_\mu=\ideal{P_\mu,Q_\mu}$ is always at most that of
$I=\ideal{P,Q}$ 
and give a bound on the number of unlucky primes. 

\begin{proposition}\label{prop:lessroot}
  Let $I=\ideal{P,Q}$ be a zero-dimensional ideal in $\Z[X,Y]$. If a prime $\mu$
  is larger than $2d^4$ such that $I_\mu$ is zero-dimensional and
$\phi_\mu(L_P(S))\ \phi_\mu(L_Q(S)) \not\equiv 0$
then $\#V(I_\mu)\leq \#V(I)$.  
\end{proposition}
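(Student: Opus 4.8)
The plan is to exploit the chain of inequalities~\eqref{eq:18a} together with a counting argument using many specializations of the indeterminate $S$. The idea is that a single "generic enough" integer value $a$ of $S$ realizes $d_T(\overline{R(T,a)}) = \#V(I)$, namely any separating value, and similarly over $\Z_\mu$. However, we must be careful: a value $a$ that is separating over $\Z$ need not be separating over $\Z_\mu$, and vice versa. So instead I would pick a value $a$ that is simultaneously (i) separating for $V(I)$, (ii) such that $L_P(a)L_Q(a)\neq 0$, and (iii) such that $\phi_\mu(L_P(a))\phi_\mu(L_Q(a))\neq 0$; this is possible because each of these conditions excludes only $O(d^4)$ values of $a$ (at most $\binom{d^2}{2}$ bad directions for separation, at most $\deg L_P + \deg L_Q = O(d^2)$ roots of the leading coefficients, and at most that many values where the reduction mod $\mu$ of $L_P(a)L_Q(a)$ vanishes while $L_P(a)L_Q(a)\neq 0$ — actually the latter is automatically controlled since $\phi_\mu$ of a nonzero integer can be zero, so one needs $\mu \nmid L_P(a)L_Q(a)$, which for fixed nonzero $L_P(a)L_Q(a)$ excludes finitely many $\mu$, but here $\mu$ is fixed and large, so one argues on the number of $a$ with $\mu \mid L_P(a)L_Q(a)$, bounded by the degree since $\phi_\mu(L_P)\phi_\mu(L_Q)\not\equiv 0$). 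Since $\mu > 2d^4$, there are more than enough residues, so such an $a$ exists.

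For such an $a$: by Lemma~\ref{lem:ineq1} applied over $\Z$ (valid since $P,Q$ are coprime and $L_P(a)L_Q(a)\neq 0$ and $X+aY$ separates $V(I)$), we get $d_T(\overline{R(T,a)}) = \#V(I)$. By Lemma~\ref{lem:ineq2} (valid since $\phi_\mu(L_P(a))\phi_\mu(L_Q(a))\neq 0$), we get $d_T(\overline{R_\mu(T,a)}) \leq d_T(\overline{R(T,a)})$. Finally, the first inequality of~\eqref{eq:18a}, $\#V(I_\mu) \leq d_T(\overline{R_\mu(T,a)})$, is exactly the content of Lemma~\ref{lem:ineq1} over $\Z_\mu$ — but here I must check its hypothesis: $P_\mu$ and $Q_\mu$ must be coprime. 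This is where I would use the assumption that $I_\mu$ is zero-dimensional: if $P_\mu$ and $Q_\mu$ had a common factor, then $\phi_\mu(R(T,S)) = R_\mu(T,S)$ would vanish identically — but wait, that is not quite a contradiction with zero-dimensionality in the usual sense. Let me reconsider: $I_\mu$ zero-dimensional means $V(I_\mu)$ is finite; $P_\mu, Q_\mu$ sharing a positive-degree common factor in $\Z_\mu[X,Y]$ would give an infinite variety, contradiction. So $P_\mu, Q_\mu$ are coprime, and Lemma~\ref{lem:ineq1} over $\Z_\mu$ applies, giving $d_T(\overline{R_\mu(T,a)}) \leq \#V(I_\mu)$. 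Chaining these three facts yields $\#V(I_\mu) \geq d_T(\overline{R_\mu(T,a)})$ and $d_T(\overline{R_\mu(T,a)}) \leq d_T(\overline{R(T,a)}) = \#V(I)$, which only gives $\#V(I_\mu) \geq d_T(\overline{R_\mu(T,a)})$ and $\#V(I) \geq d_T(\overline{R_\mu(T,a)})$ — not yet the desired $\#V(I_\mu) \leq \#V(I)$.

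So the chain must be traversed in the other direction. The correct move: choose $a$ separating for $V(I_\mu)$ (not for $V(I)$), with the leading-coefficient nonvanishing conditions over both rings. Then Lemma~\ref{lem:ineq1} over $\Z_\mu$ gives $d_T(\overline{R_\mu(T,a)}) = \#V(I_\mu)$; Lemma~\ref{lem:ineq2} gives $\#V(I_\mu) = d_T(\overline{R_\mu(T,a)}) \leq d_T(\overline{R(T,a)})$; and Lemma~\ref{lem:ineq1} over $\Z$ gives $d_T(\overline{R(T,a)}) \leq \#V(I)$. Composing: $\#V(I_\mu) \leq \#V(I)$, as desired. The existence of such an $a$ again follows from a union bound: the bad values of $a$ are those non-separating for $V(I_\mu)$ (at most $\binom{\#V(I_\mu)}{2} \leq \binom{d^2}{2}$ of them, using $\#V(I_\mu)\leq d^2$ by Bézout over $\Z_\mu$), plus the roots of $L_P L_Q$ and the residues $a$ with $\mu \mid L_P(a)L_Q(a)$ and the roots of $\phi_\mu(L_P)\phi_\mu(L_Q)$; all together $O(d^4)$, strictly less than $\mu > 2d^4$. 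The main obstacle is precisely getting the direction of the argument right and carefully discharging the coprimality hypothesis of Lemma~\ref{lem:ineq1} over $\Z_\mu$ from zero-dimensionality of $I_\mu$; the counting of excluded residues is routine once $\mu > 2d^4$ is in hand.
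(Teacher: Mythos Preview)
Your final argument is correct and matches the paper's proof: choose $a$ separating for $V(I_\mu)$ with $\phi_\mu(L_P(a))\phi_\mu(L_Q(a))\neq 0$ (which exists since the excluded values number at most $2d+\binom{d^2}{2}<2d^4<\mu$), then chain Lemma~\ref{lem:ineq1} over $\Z_\mu$, Lemma~\ref{lem:ineq2}, and Lemma~\ref{lem:ineq1} over $\Z$ to obtain $\#V(I_\mu)=d_T(\overline{R_\mu(T,a)})\leq d_T(\overline{R(T,a)})\leq\#V(I)$. Your explicit derivation of the coprimality of $P_\mu,Q_\mu$ from the zero-dimensionality of $I_\mu$ is a point the paper leaves implicit, and your initial detour (choosing $a$ separating for $V(I)$) harmlessly illustrates why the other direction is the right one.
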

\begin{proof}
  Let $\mu$ be a prime that satisfies the hypotheses of the proposition.  We
  also consider an integer $a<\mu$ such that 
$\phi_\mu(L_P(a))\  \phi_\mu(L_Q(a))\neq 0$ 
and such that the linear form $X+aY$
  is separating for $I_\mu$. Such an integer exists because  (i)
  $\phi_\mu(L_P(S))$ and $\phi_\mu(L_Q(S))$ 
are not
  identically zero by hypothesis and they have degree at most $d$ 
and, 
 since $I_\mu$ is zero dimensional, (ii) $I_\mu$  
has at most $d^2$ solutions which define
  at most $d^2\choose 2$ directions in which two solutions are aligned.
  Since $2d+{d^2\choose 2}< 2d^4$ (for $d\geq 2$), there exists such an
  integer $a\leq 2d^4<\mu$.  With such an $a$, we can apply
  Lemmas~\ref{lem:ineq1} and \ref{lem:ineq2} which imply that
  $\#V(I_\mu)=d_T(\overline{R_\mu(T,a)})\leq d_T(\overline{R(T,a)})\leq \#V(I)$.
\end{proof}
%

Next, we bound the number of primes that are unlucky for the ideal $\ideal{P,Q}$.

\begin{proposition}\label{prop:luckyroot}
  An upper bound on the number of unlucky primes for the ideal $\ideal{P,Q}$
  can be explicitly computed in terms of $d$ and~$\tau$, and this bound is in
  $\sO(d^4+d^3\tau)$.
\end{proposition}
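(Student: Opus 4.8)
The plan is to exhibit an explicit nonzero integer $\Delta$, computable from $d$ and $\tau$, such that every prime $\mu > 2d^4$ dividing neither $\Delta$ nor... wait, rather: such that every prime $\mu$ that is unlucky for $\ideal{P,Q}$ either is at most $2d^4$ or divides $\Delta$. Then the number of unlucky primes is at most $2d^4/\log 2$ (trivial bound on the small primes) plus the number of prime divisors of $\Delta$, which is at most $\log_2|\Delta|$; so it suffices to show $\log_2|\Delta| \in \sO(d^4+d^3\tau)$. By Definition~\ref{def:lucky-nb-roots}, $\mu$ is unlucky iff $\mu \leq 2d^4$, or $\phi_\mu(L_P(S))\phi_\mu(L_Q(S))\equiv 0$, or $\#V(I_\mu)\neq \#V(I)$. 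I will handle these three failure modes by contributing three explicit factors to $\Delta$.

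First, the condition $\phi_\mu(L_P(S))\ \phi_\mu(L_Q(S))\not\equiv 0$: since $L_P(S)$ and $L_Q(S)$ are nonzero polynomials in $\Z[S]$ (as $P,Q$ are coprime, in particular nonzero), $\phi_\mu$ kills one of them only if $\mu$ divides every coefficient, hence only if $\mu$ divides the (nonzero) product of, say, the leading coefficients of $L_P$ and $L_Q$, or more safely the gcd of their coefficients. By Lemma~\ref{lem:complexity:shear} the sheared polynomials have bitsize $\sO(d+\tau)$, so $L_P,L_Q$ have coefficients of bitsize $\sO(d+\tau)$ and degree $\leq d$; take $\Delta_1$ to be a fixed nonzero coefficient of $L_P$ times a fixed nonzero coefficient of $L_Q$, so $\log_2|\Delta_1|\in\sO(d+\tau)$. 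Second, for the count of solutions: by Proposition~\ref{prop:lessroot}, once $\mu>2d^4$ and $\phi_\mu(L_P(S))\phi_\mu(L_Q(S))\not\equiv0$ we already have $\#V(I_\mu)\leq\#V(I)$ and $I_\mu$ zero-dimensional (which also needs checking — $I_\mu$ zero-dimensional should follow from the resultant $R_\mu$ being nonzero, controlled by a further factor; I will fold this in). So the only remaining failure is a \emph{strict} drop $\#V(I_\mu)<\#V(I)$. The strategy here is to pick one integer $a\leq 2d^4$ that is separating for $I$ (it exists by the counting argument already used, since $I$ has at most $d^2$ solutions) and with $L_P(a)L_Q(a)\neq 0$; for this $a$, Lemma~\ref{lem:ineq1} gives $d_T(\overline{R(T,a)})=\#V(I)$. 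If additionally $\phi_\mu(L_P(a))\phi_\mu(L_Q(a))\neq 0$ and $\phi_\mu$ preserves the degree of the squarefree part of $R(T,a)$, then $d_T(\overline{R_\mu(T,a)})=d_T(\overline{R(T,a)})=\#V(I)$; combined with Lemma~\ref{lem:ineq1} over $\Z_\mu$ this forces $\#V(I_\mu)\geq\#V(I)$, hence equality, hence $\mu$ lucky. So $\mu$ is unlucky (with $\mu>2d^4$, $\phi_\mu(L_P L_Q)\not\equiv0$) only if $\mu$ divides $L_P(a)L_Q(a)$, or $\mu$ divides the leading coefficient $L_R(a)$ of $R(T,a)$ in $T$, or $\mu$ divides the resultant of $R(T,a)$ and its derivative $R'(T,a)$ — the last being the obstruction to the squarefree part dropping degree, via the standard fact $\deg\overline{\phi_\mu(f)}<\deg\overline f$ implies $\mu \mid \mathrm{Res}_T(f,f')$ up to leading-coefficient issues. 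Set $\Delta_2 = L_P(a)\,L_Q(a)\,L_R(a)\,\mathrm{Res}_T(R(T,a),R'(T,a))$.

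The remaining work is purely a size estimate: bound $\log_2|\Delta|$ where $\Delta$ is (a suitable nonzero choice among the factors of) $\Delta_1\Delta_2$. From Lemma~\ref{lem:complexity:shear}, $R(T,S)$ has degree $\leq 2d^2$ in each variable and bitsize $\sO(d^2+d\tau)$; evaluating at $S=a$ with $a\leq 2d^4$ (bitsize $\sO(\log d)$) multiplies bitsizes by degree, giving $R(T,a)\in\Z[T]$ of degree $\leq 2d^2$ and bitsize $\sO(d^2+d\tau)$ (the $a$-contribution, $\sO(d^2\log d)$, is absorbed). Then $L_R(a)$, $L_P(a)$, $L_Q(a)$ have bitsize $\sO(d^2+d\tau)$; and $\mathrm{Res}_T(R(T,a),R'(T,a))$, being a determinant of size $\sim 4d^2$ with entries of bitsize $\sO(d^2+d\tau)$, has bitsize $\sO(d^2\cdot(d^2+d\tau)) = \sO(d^4+d^3\tau)$. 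Hence $\log_2|\Delta|\in\sO(d^4+d^3\tau)$, and adding the $\leq 2d^4$ small primes keeps the total count of unlucky primes in $\sO(d^4+d^3\tau)$, as claimed. All quantities ($a$, the coefficients, the resultant) are explicitly computable, so the bound is explicit.

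\textbf{Main obstacle.} The delicate point is the bookkeeping of which leading coefficients must be nonzero for each specialization/reduction step to be valid, so that the factors of $\Delta$ that are actually used are genuinely nonzero integers; in particular, showing that $I_\mu$ is zero-dimensional and that $R_\mu(T,a)=\phi_\mu(R(T,a))$ really holds requires $\phi_\mu$ not to annihilate $L_P(S)L_Q(S)$ as polynomials in $S$ \emph{and} not to annihilate $R(T,a)$ entirely (i.e. $\mu\nmid L_R(a)$), and assembling all these conditions into one explicit $\Delta$ without circularity is the part that needs care. The size estimate for $\mathrm{Res}_T(R(T,a),R'(T,a))$ is routine given Lemma~\ref{lem:complexity:shear}.
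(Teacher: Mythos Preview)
Your overall strategy matches the paper's proof almost exactly: fix a separating value $a\leq 2d^4$ with $L_P(a)L_Q(a)L_R(a)\neq 0$, and show that any $\mu>2d^4$ is lucky whenever $\mu$ avoids dividing a few explicit integers whose total bitsize is $\sO(d^4+d^3\tau)$. The paper, like you, singles out $L_P(a)$, $L_Q(a)$, $L_R(a)$ as three of these integers, and reduces the remaining condition to ``the degree of $\gcd(R(T,a),R'(T,a))$ does not change modulo~$\mu$''.

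There is, however, a genuine gap in your choice of the fourth factor $\mathrm{Res}_T(R(T,a),R'(T,a))$. This is (up to a unit) the discriminant of $R(T,a)$, and it vanishes exactly when $R(T,a)$ is \emph{not} squarefree. But $R(T,a)$ is typically not squarefree: the root $T=\alpha+a\beta$ appears with multiplicity equal to the intersection multiplicity of $(\alpha,\beta)$ in $V(I)$, so as soon as $\ideal{P,Q}$ has any non-simple solution (e.g.\ $P=X$, $Q=Y^2-X$ gives $R(T,a)=T^2$), your $\Delta_2$ is identically zero and yields no bound. The ``standard fact'' you quote, that $\deg\overline{\phi_\mu(f)}<\deg\overline f$ forces $\mu\mid\mathrm{Res}_T(f,f')$, is vacuous unless $f$ is already squarefree over~$\Z$.

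The correct obstruction is the first \emph{nonzero} principal subresultant coefficient $\mathrm{sres}_{T,k}(R(T,a),R'(T,a))$, where $k=\deg\gcd(R(T,a),R'(T,a))$: provided $\mu\nmid L_R(a)$, the gcd degree can only increase modulo $\mu$ when $\mu$ divides this integer. The paper does not name this coefficient explicitly but instead invokes \cite[Lemma~4.12]{Yap-2000}, which packages precisely this subresultant argument and bounds the product of the bad primes by $(2^{\tau'}\sqrt{d'+1})^{2d'+2}$ with $d'\leq 2d^2$ and $\tau'\in\sO(d^2+d\tau)$. Once you replace your resultant by this subresultant (or by Yap's lemma), your size estimate goes through unchanged, since the relevant subresultant coefficient has bitsize of the same order $\sO(d^4+d^3\tau)$.
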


\begin{proof}
  According to Definition~\ref{def:lucky-nb-roots}, a prime $\mu$ is unlucky if
  it is smaller than $2d^4$, if $\phi_{\mu}({L_P(S)})$ $\phi_{\mu}(L_Q(S))\not\equiv 0$, or if
  $\#V(I)\neq \#V(I_\mu)$. In the following, we consider $\mu>2d^4$.  We first
  determine some conditions on $\mu$ that ensure that $\#V(I)= \#V(I_\mu)$, and
  we then bound the number of $\mu$ that do not satisfy these conditions.  As we
  will see, under these conditions, $L_P(S)$ and $L_Q(S)$ do not
  vanish modulo $\mu$ and thus this constraint is redundant.

The first part of the proof is similar in spirit to that of
Proposition~\ref{prop:lessroot} in which we first fixed a prime $\mu$ and then
specialized the polynomials at $S=a$ such that the form $X+aY$ was separating
for $I_\mu$. Here, we first choose $a$ such that $X+aY$ is separating for
$I$. With some conditions on $\mu$, Lemmas~\ref{lem:ineq1} and \ref{lem:ineq2}
imply Equation~\eqref{eq:21a} and we determine some more conditions on $\mu$
such that the middle inequality of \eqref{eq:21a} is an equality. We thus get
$\#V(I_\mu)\geq \#V(I)$ which is the converse of that of
Proposition~\ref{prop:lessroot} and thus $\#V(I_\mu)=\#V(I)$. In the second part
of the proof, we bound the number of $\mu$ that violate the conditions we
considered.

\medskip\noindent\emph{Prime numbers such that ${\#V(I)\neq \#V(I_\mu)}$.}\quad Let $a$ be such that
the form $X+aY$ separates $V(I)$ and $L_{P}(a)\,L_{Q}(a)\,L_{R}(a)\neq 0$.\footnote{It can be shown
  that $L_{P}(a)\,L_{Q}(a)\neq 0$ implies $L_{R}(a)\neq 0$ (see for instance \cite[Lemma
  11]{bouzidi2013RurWorstCase}) but this property does not simplify the proof.} 
 Similarly as in the
proof of Proposition~\ref{prop:lessroot}, since $L_R(S)$ has degree at most $2d^2$
(Lemma~\ref{complexity:subresultant}) and $2d+2d^2+{d^2\choose 2}< 2d^4$ (for $d\geq 2$), we can choose $a\leq 2d^4$.
 
\emph{We consider any prime $\mu>2d^4$ such that 
$\phi_\mu(L_P(a))\ \phi_\mu(L_Q(a))\ \phi_\mu(L_R(a))\neq 0$}. By
Lemmas~\ref{lem:ineq1} and \ref{lem:ineq2}, we have
\begin{equation}\label{eq:21a}
\#V(I_\mu)\geq d_T(\overline{R_\mu(T,a)})\leq d_T(\overline{R(T,a)})= \#V(I),
\end{equation}
since the first inequality trivially holds when $I_\mu$ is not zero-dimensional and since   $X+aY$ separates $V(I)$.


Now, $d_T(\overline{R(T,a)})= d_T(R(T,a)) - d_T(\gcd(R(T,a),R'(T,a)))$, and
similarly for $R_\mu(T,a)$.  The leading coefficient of $R(T,S)$ with respect to
$T$ is $L_R(S)$, and since it does not vanish at $S=a$, $L_R(a)$ is the leading
coefficient of $R(T,a)$. 
In addition, since $\phi_\mu(L_P(a))\,
\phi_\mu(L_Q(a))\neq 0$, we can specialize the resultant $R$ by $\phi_\mu$, thus
   $\phi_\mu(R(T,a))=Res_Y(\phi_\mu(P(T-aY,Y)), \phi_\mu(Q(T-aY,Y)))$ \cite[Proposition 4.20]{BPR06}. Hence,
$\phi_\mu(R(T,a))=R_\mu(T,a)$ and 
the hypothesis $\phi_\mu(L_R(a))\neq 0$ implies that $R_\mu(T,a)$ and $R(T,a)$
have the same degree.
It follows that, \emph{if $\mu$ is such that the degree of
  $\gcd(R(T,a),R'(T,a))$ does not change when $R(T,a)$ and $R'(T,a)$ are reduced
modulo $\mu$}, we have
\[ \#V(I_\mu)\geq d_T(\overline{R_\mu(T,a)}) = d_T(\overline{R(T,a)})= \#V(I).\]

Since $\phi_\mu(R(T,a))=R_\mu(T,a)$ and  $\phi_\mu(L_R(a)) \neq 0$, the resultant $R_\mu(T,a)$ does
not identically vanish and thus $I_\mu$ is zero-dimensional. Furthermore, 
since $\mu>2d^4$ and $\phi_\mu(L_P(a))\ \phi_\mu(L_Q(a)) \neq 0$, we can apply
Proposition~\ref{prop:lessroot} which yields that $\#V(I_\mu)\leq \#V(I)$ and thus  $\#V(I_\mu)=\#V(I)$.


Therefore, the primes $\mu$ such that $\#V(I_\mu)\neq\#V(I)$ are among those
such that $\mu\leq 2d^4$, or  $L_P(a)$, $L_Q(a)$ or $L_R(a)$ vanishes modulo $\mu$ or such
that the degree of $\gcd(R(T,a),R'(T,a))$ changes when $R(T,a)$ and
$R'(T,a)$ are reduced modulo $\mu$.  Note that if $L_P(a)$ and $L_Q(a)$ do not vanish modulo $\mu$,
then $L_P(S)$ and $L_Q(S)$ do
not identically vanish modulo $\mu$.

\medskip\noindent\emph{Bounding the number of prime divisors of  $L_P(a)$, $L_Q(a)$ or $L_R(a)$.}\quad
The number of prime divisors of an integer $z$ is bounded by its bitsize. Indeed, its bitsize is $\lfloor \log
z \rfloor +1$ and its factorization
into $w$ (possibly identical) prime numbers directly yields that $2^w\leq \prod_{i=1}^{w}z_i =z=
2^{\log z}\leq 2^{\lfloor \log z \rfloor +1}$. We can thus bound the number of prime divisors by
bounding the bitsize of 
$L_P(a)$, $L_Q(a)$ and $L_R(a)$.  We start by bounding the bitsize of 
$L_P(S)$, $L_Q(S)$ and $L_R(S)$.

 Each coefficient of $P(T-SY,Y)$ has bitsize at most $\tau'=\tau+d\log d +\log(d+1)+1$. Indeed,  $(T-SY)^i$
 is a sum of $i+1$ monomials whose coefficients are  binomials $\binom{i\leq d}{j}<d^d$.
The claim follows since each coefficient of $P(T-SY,Y)$ is the sum of at most $d+1$ such binomials, each
 multiplied by a coefficient of $P(X,Y)$ which has bitsize at most $\tau$.
We get the same bound for the coefficients of $Q(T-SY,Y)$ and thus for $L_P(S)$ and $L_Q(S)$ as
well. Concerning $L_R(S)$, we have that $R(T,S)$ is the resultant
  of $P(T-SY,Y)$ and $Q(T-SY,Y)$ thus, by Lemma~\ref{complexity:subresultant},
  its coefficients are of bitsize $\sO(d\tau')$. 
In fact, an upper bound
  can be explicitly computed using, for
  instance, the bound of \cite[Theorem 8.46]{BPR06} which implies that the resultant of two trivariate
  polynomials of total degree $d'$ and bitsize $\tau'$ has bitsize at most $2d'(\tau'+\lfloor\log
  2d'\rfloor +1)+2(\lfloor\log (2d'^2+1)\rfloor +1)$, which is in $\sO(d^2+d\tau)$ in our case. 
Therefore, $L_P(S)$, $L_Q(S)$ and $L_R(S)$ have degree at most $2d^2$ and their bitsizes can be
explicitly  bounded by a function of $d$ and $\tau$ in $\sO(d^2+d\tau)$.

Finally, since $a\leq 2d^4$, its bitsize is at most $\sigma=4\log d +2$. It is
straightforward that the result of an evaluation of a univariate polynomial of degree at most $d'$ and
bitsize $\tau'$ at an integer value of bitsize $\sigma$ has bitsize at most
$d'\sigma+\tau'+\log(d'+1)+1$. Here $d'\leq 2d^2$ and $\tau'$ is in $\sO(d^2+d\tau)$.  We thus
proved that we can compute an explicit bound, in $\sO(d^2+d\tau)$, on the number of prime divisors
of $L_P(a)$, $L_Q(a)$, or $L_R(a)$.

\medskip\noindent\emph{Bounding the number of prime $\mu$ such that the degree
  of $\gcd(R(T,a),R'(T,a))$ changes when $R(T,a)$ and $R'(T,a)$ are reduced
  modulo $\mu$. }\quad By \cite[Lemma 4.12]{Yap-2000}, given two univariate
polynomials in $\Z[X]$ of degree at most $d'$ and bitsize at most $\tau'$, the
degree of their gcd changes when the polynomials are considered modulo $\mu$ on
a set of $\mu$ whose product is bounded\footnote{\cite[Lemma 4.12]{Yap-2000}
  states the bound as $N^{2d'+2}$ where $N$ is the maximum Euclidean norm of the
  vectors of coefficients of the polynomials.} by
$(2^{\tau'}\sqrt{d'+1})^{2d'+2}$. As noted above, the number of such primes
$\mu$ is bounded by the bitsize of this bound, and thus is bounded by
$(d'+1)\,(2\tau'+\log(d'+1))+1$.  Here $d'\leq 2d^2$ and $\tau'$ is in
$\sO(d^2+d\tau)$ since our explicit bound on the bitsize of $L_R(a)$ holds as
well for the bitsize of $R(T,a)$, and, since $R(T,a)$ is of degree at most
$2d^2$, the bitsize of $R'(T,a)$ is bounded by that of $R(T,a)$ plus $1+\log
2d^2$. We thus obtain an explicit bound in $\sO(d^4+d^3\tau)$ on the number of
primes $\mu$ such that the degree of $\gcd(R(T,a),R'(T,a))$ changes when
$R(T,a)$ and $R'(T,a)$ are reduced modulo $\mu$.

The result follows by summing  this bound with the bounds we obtained  on the number of
prime divisors of $L_P(a)$, $L_Q(a)$, or $L_R(a)$, and a bound (e.g. $2d^4$) on the number of primes smaller than~$2d^4$.
\end{proof}




\subsection{Counting the number of solutions over $\Z_\mu$}
\label{sec:counting-Zmu}

For counting the number of (distinct) solutions of
$\ideal{P_\mu,Q_\mu}$, we  use a classical algorithm for computing a triangular decomposition of
an ideal defined by two bivariate polynomials. We first  recall  this algorithm, slightly
adapted to our needs, and analyze its arithmetic complexity.

\subsubsection{Triangular decomposition}
\label{sec:tri-dec}

Let $P$ and $Q$ be two polynomials in $\F[X,Y]$.
A decomposition of the solutions of the system $\{P,Q\}$ using the
subresultant sequence appears in the theory of triangular sets
\cite{lazard1991,Li-modpn-11} and for the computation of topology of curves
\cite{VegKah:curve2d:96}. 


The idea is to use Lemma~\ref{lem:fund-prop-subres} which states that, after
specialization at $X=\alpha$, the first (with respect to increasing $i$) nonzero
subresultant $Sres_{Y,i}(P,Q)(\alpha,Y)$ is of degree $i$ and is equal to the
gcd of $P(\alpha,Y)$ and $Q(\alpha,Y)$.  This induces a decomposition of the
system $\{P,Q\}$ into triangular subsystems $(\{A_i(X),Sres_{Y,i}(P,Q)(X,Y)\})$
 where a solution $\alpha$ of $A_i(X)=0$ is such that 
the system $\{P(\alpha,Y), Q(\alpha,Y)\}$ admits exactly $i$ roots (counted with multiplicity), 
which  are exactly those of $Sres_{Y,i}(P,Q)(\alpha,Y)$.
Furthermore, these triangular subsystems are regular chains, i.e.,  the
leading coefficient of the bivariate polynomial (seen in  $Y$)  is coprime with the univariate
polynomial. 
For clarity and self-containedness, we recall this decomposition in Algorithm~\ref{alg:tri-dec-mod},
where, in addition, we restrict the solutions of the system $\{P, Q\}$ to those where some univariate polynomials $A(X)$
vanishes ($A$ could be identically zero).

\begin{algorithm}[t]
  \caption{
    Triangular decomposition \cite{VegKah:curve2d:96,Li-modpn-11}}
\label{alg:tri-dec-mod}
\begin{algorithmic}[1]
  \REQUIRE{ $P,Q$ in $\F[X,Y]$ coprime such that $Lc_Y(P)$ and $Lc_Y(Q)$ are coprime,\footnotemark $d_Y(Q)\leq d_Y(P)$, and \\
\qquad $A \in \F[X]$ squarefree.}
  \ENSURE{Triangular decomposition 
  $\{(A_i(X),$ $B_i(X,Y))\}_{i\in\cal I}$ such that $V(\ideal{P,Q,A})$ is the disjoint union
  of the sets $V(\ideal{A_i(X),B_i(X,Y)})_{i\in\cal I}$}
%
%
\STATE Compute the subresultant sequence of $P$ and $Q$ with respect to $Y$: $B_i=Sres_{Y,i}(P,Q)$
\STATE  $G_0=\gcd(\overline{Res_Y(P,Q)},A)$ and ${\calT}=\emptyset$
\FOR {$i=1$ \TO $d_Y(Q)$}
\STATE $G_i=\gcd(G_{i-1},sres_{Y,i}(P,Q))$

\STATE $A_i=G_{i-1}/G_i$
\STATE if $d_X(A_i)>0$, add $(A_i,B_i)$ to ${\calT}$ 
\ENDFOR
\RETURN ${\calT}= \{(A_i(X),B_i(X,Y))\}_{i\in{\cal I}}$

\end{algorithmic}
\end{algorithm}
\footnotetext{The hypothesis that $Lc_Y(P)$ and $Lc_Y(Q)$ are coprime can be
  relaxed by applying the algorithm recursively (see \cite{Li-modpn-11} for
  details). We require here this hypothesis for complexity issues.}

The following lemma states the correctness of Algorithm~\ref{alg:tri-dec-mod}
which follows from Lemma~\ref{lem:fund-prop-subres} and from the fact that the
solutions of $P$ and $Q$ project on the roots of their resultant.
%

\begin{lemma}[{\cite{VegKah:curve2d:96,Li-modpn-11}}]\label{lem:tridec-correctness}
  Algorithm~\ref{alg:tri-dec-mod} computes a triangular decomposition 
  $\{(A_i(X),$ $B_i(X,Y))\}_{i\in\cal I}$ such that 
\begin{itemize}\cramped
  \item[(i)] the set $V(\ideal{P,Q,A})$ is the disjoint union
  of the sets $V(\ideal{A_i(X),B_i(X,Y)})_{i\in\cal I}$,
  \item[(ii)] $\prod_{i\in\cal I}A_i$ is
  squarefree, 
  \item[(iii)] $\forall \alpha\in V(A_i)$,
      $B_i(\alpha,Y)$ is of degree $i$ and is equal to $\gcd(P(\alpha,Y),$
      $Q(\alpha,Y))$, and 
  \item[(iv)] $A_i(X)$ and $Lc_Y(B_i(X,Y))$ are coprime.
\end{itemize}
\end{lemma}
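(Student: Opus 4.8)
The plan is to prove Lemma~\ref{lem:tridec-correctness} by unfolding the definitions in Algorithm~\ref{alg:tri-dec-mod} and leaning on Lemma~\ref{lem:fund-prop-subres}. First I would set up the key combinatorial bookkeeping: for a root $\alpha$ of $\mathrm{Res}_Y(P,Q)$, let $k(\alpha)$ be the smallest index with $\mathrm{sres}_{Y,k}(P,Q)(\alpha)\neq 0$ (equivalently, by the gap structure, the first $i$ with $Sres_{Y,i}(P,Q)(\alpha,Y)\not\equiv 0$). Since $Lc_Y(P)$ and $Lc_Y(Q)$ are coprime, their leading coefficients cannot both vanish at $\alpha$, so Lemma~\ref{lem:fund-prop-subres} applies and gives that $Sres_{Y,k(\alpha)}(P,Q)(\alpha,Y)$ has degree exactly $k(\alpha)$ and equals $\gcd(P(\alpha,Y),Q(\alpha,Y))$ up to a nonzero constant. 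The core observation linking this to the algorithm is: $\alpha$ is a root of $G_{i-1}$ iff $\mathrm{sres}_{Y,j}(P,Q)(\alpha)=0$ for all $j<i$ and $\alpha$ is a root of $\overline{\mathrm{Res}_Y(P,Q)}$ and of $A$; hence $\alpha$ is a root of $A_i=G_{i-1}/G_i$ iff additionally $\mathrm{sres}_{Y,i}(P,Q)(\alpha)\neq 0$, i.e. iff $k(\alpha)=i$ and $\alpha\in V(A)\cap V(\mathrm{Res}_Y(P,Q))$.

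With that correspondence in hand, the four claims follow fairly directly. For (iii): if $\alpha\in V(A_i)$ then $k(\alpha)=i$, so by the key observation $B_i(\alpha,Y)=Sres_{Y,i}(P,Q)(\alpha,Y)$ has degree $i$ and is $\gcd(P(\alpha,Y),Q(\alpha,Y))$. For (i): a point $(\alpha,\beta)\in V(\ideal{P,Q,A})$ satisfies $A(\alpha)=0$ and $\beta$ is a common root of $P(\alpha,Y),Q(\alpha,Y)$, so $\alpha$ is a root of the resultant and of $A$, hence of exactly one $A_i$ (namely $i=k(\alpha)$), and $\beta$ is then a root of $\gcd(P(\alpha,Y),Q(\alpha,Y))=B_i(\alpha,Y)$; conversely each $V(\ideal{A_i,B_i})$ sits inside $V(\ideal{P,Q,A})$ because on $V(A_i)$ we have $B_i(\alpha,Y)\mid P(\alpha,Y)$ and $\mid Q(\alpha,Y)$ and $A_i\mid A$ (as $A$ is squarefree, $G_0\mid A$ and each $A_i\mid G_0$). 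Disjointness is exactly the fact that each relevant $\alpha$ lies in $A_i$ for the single index $i=k(\alpha)$. For (ii): $\prod_i A_i$ telescopes to a divisor of $G_0=\gcd(\overline{\mathrm{Res}_Y(P,Q)},A)$, which is squarefree since $A$ is; a divisor of a squarefree polynomial is squarefree. For (iv): $A_i$ divides $G_0$ which divides $\overline{\mathrm{Res}_Y(P,Q)}$, so any common root $\alpha$ of $A_i$ and $Lc_Y(B_i)=\mathrm{sres}_{Y,i}(P,Q)$ would have $k(\alpha)<i$ (the degree-$i$ coefficient of $Sres_{Y,i}$ vanishes at $\alpha$, pushing the first nonvanishing principal subresultant coefficient below $i$), contradicting $k(\alpha)=i$; hence $A_i$ and $Lc_Y(B_i)$ share no root and, since $\mathbb{F}[X]$ is a PID and $A_i$ is squarefree, they are coprime.

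I would present the argument in that order: (a) the reduction to the function $k(\alpha)$ and the statement of what Lemma~\ref{lem:fund-prop-subres} gives; (b) the $G_i$/$A_i$ root characterization; (c) then deduce (iii), (i) including disjointness, (ii), (iv) in turn. The statement is essentially a compilation of known facts (it is cited to \cite{VegKah:curve2d:96,Li-modpn-11}), so the write-up should be kept brief, citing Lemma~\ref{lem:fund-prop-subres} for the subresultant specialization and the standard fact that solutions of $\{P,Q\}$ project onto roots of $\mathrm{Res}_Y(P,Q)$ when the leading coefficients are not simultaneously zero.

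The main obstacle, such as it is, is bookkeeping rather than mathematics: one must be careful that $Sres_{Y,i}$ can have $Y$-degree strictly less than $i$ at special $\alpha$ (its leading coefficient being $\mathrm{sres}_{Y,i}$), so the algorithm's use of $\mathrm{sres}_{Y,i}$ in the gcd chain is precisely what filters out those degenerate indices; and one must handle the edge cases where $A\equiv 0$ (then $G_0=\overline{\mathrm{Res}_Y(P,Q)}$, so we decompose all of $V(\ideal{P,Q})$) and where $P,Q$ share a common factor (excluded by the coprimality hypothesis, which guarantees $\mathrm{Res}_Y(P,Q)\not\equiv 0$). I would also note explicitly that the coprimality of $Lc_Y(P)$ and $Lc_Y(Q)$ is what lets us invoke Lemma~\ref{lem:fund-prop-subres} at every root of the resultant, since it ensures $a_p(\alpha)$ and $b_q(\alpha)$ never both vanish.
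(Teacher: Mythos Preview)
Your proposal is correct and follows exactly the approach the paper indicates: the paper does not give a detailed proof but simply states that the lemma ``follows from Lemma~\ref{lem:fund-prop-subres} and from the fact that the solutions of $P$ and $Q$ project on the roots of their resultant,'' citing \cite{VegKah:curve2d:96,Li-modpn-11}. Your write-up is precisely a careful unfolding of that sentence, and the bookkeeping you describe (the index $k(\alpha)$, the root characterization of the $G_i$ and $A_i$, and the deductions of (i)--(iv)) is the intended argument.
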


In the following lemma, we analyze the complexity of
Algorithm~\ref{alg:tri-dec-mod} for $P$ and $Q$ of degree at most $d_X$ in $X$
and $d_Y$ in $Y$ and $A$ of degree at most $d^2$, where $d$ denotes a bound on
the total degree of $P$ and $Q$.  We will use Algorithm~\ref{alg:tri-dec-mod}
with polynomials with coefficients in $\F=\Z_\mu$ and we thus only consider its
arithmetic complexity in $\F$.
Note that the bit complexity of this algorithm, over $\Z$,
is analyzed in \cite[Theorem 19]{det-jsc-2009} and its arithmetic complexity is
thus implicitly analyzed as well; for clarity, we provide here a short proof.

\begin{lemma}\label{lem:tridec}
  Algorithm~\ref{alg:tri-dec-mod} performs $\sO(d_Xd_Y^3)=\sO(d^4)$ arithmetic
  operations in $\F$. 
\end{lemma}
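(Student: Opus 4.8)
The statement to prove is the arithmetic complexity bound $\sO(d_Xd_Y^3)=\sO(d^4)$ for Algorithm~\ref{alg:tri-dec-mod}, so the plan is simply to walk through the algorithm line by line and bound the cost of each step, using the complexity results recalled in Section~\ref{sec:prelim} (Lemmas~\ref{complexity:subresultant} and \ref{complexity:gcd}) together with the degree bounds on the polynomials involved.

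\textbf{Step 1: the subresultant sequence (line 1).}
First I would bound the cost of computing the whole subresultant sequence $B_i=Sres_{Y,i}(P,Q)$ for $i=0,\dots,d_Y(Q)$. By Lemma~\ref{complexity:subresultant} applied with $n=1$ (one auxiliary variable $X$), computing \emph{all} the subresultants costs $\sO(d_X d_Y^2)$ arithmetic operations in $\F$; I would also note for later use that each $B_i$ has degree at most $2d_X(d_Y-i)=O(d_Xd_Y)$ in $X$ and degree at most $i\le d_Y$ in $Y$, so its principal subresultant coefficient $sres_{Y,i}(P,Q)$ is a univariate polynomial in $X$ of degree $O(d_Xd_Y)$.

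\textbf{Step 2: the initialization and the loop (lines 2--7).}
Next I would handle line 2: $Res_Y(P,Q)$ is $B_0$, already computed, of degree $O(d_Xd_Y)$ in $X$; its squarefree part and then $\gcd(\overline{Res_Y(P,Q)},A)$ are gcd/gcd-free-part computations on univariate polynomials of degree $O(d_Xd_Y)$ (recall $d_X(A)\le d^2=O(d_Xd_Y)$), hence $\sO(d_Xd_Y)$ operations by Lemma~\ref{complexity:gcd}. The key observation for the loop is that $G_0\mid\overline{Res_Y(P,Q)}$ and $G_0\mid A$, and each successive $G_i$ divides $G_{i-1}$, so every $G_i$, $A_i$, and $sres_{Y,i}(P,Q)$ appearing in lines 4--5 is a univariate polynomial in $X$ of degree $O(d_Xd_Y)$. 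Therefore each iteration performs a constant number of gcd / exact-division operations on univariate polynomials of degree $O(d_Xd_Y)$, costing $\sO(d_Xd_Y)$ arithmetic operations in $\F$ by Lemma~\ref{complexity:gcd} (exact division being no more expensive than gcd). Summing over the $d_Y(Q)\le d_Y$ iterations gives $\sO(d_Xd_Y^2)$ for the whole loop, dominated by --- or of the same order as --- the subresultant computation. Adding the $\sO(d_Xd_Y^2)$ from Step 1 yields a total of $\sO(d_Xd_Y^2)$; since $d_X\le d$ and $d_Y\le d$ this is $\sO(d^3)$, which is within the claimed $\sO(d^4)$ bound.

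\textbf{Main obstacle.}
The only subtle point --- and the step I expect to need the most care --- is justifying that all the polynomials manipulated in the loop (the $G_i$'s and $A_i$'s) stay of degree $O(d_Xd_Y)$ in $X$, rather than growing; this follows because they are all divisors of $\overline{Res_Y(P,Q)}$ (equivalently of $G_0$), but one must state this divisibility chain explicitly. A secondary point is that Lemma~\ref{complexity:subresultant} as stated bounds the cost of \emph{one} subresultant; I would either invoke that the standard subresultant-sequence algorithm (a fast variant of Euclid, cf.\ the remainder-sequence remark after Lemma~\ref{lem:fund-prop-subres}) produces the \emph{entire} sequence within the same $\sO(d_Xd_Y^2)$ bound, or simply note that even computing the $d_Y+1$ subresultants separately costs $\sO(d_Xd_Y^3)$, which already matches the claimed bound and makes the loop cost negligible. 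Either way the final tally is $\sO(d_Xd_Y^3)=\sO(d^4)$, completing the proof.
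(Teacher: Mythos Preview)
Your proposal is correct and follows essentially the same approach as the paper: bound the subresultant computation via Lemma~\ref{complexity:subresultant}, then bound the at most $d_Y$ gcd/division steps on univariate polynomials of degree $O(d_Xd_Y)$ via Lemma~\ref{complexity:gcd}. You are in fact a bit more careful than the paper, which simply quotes $\sO(d_Xd_Y^3)$ for the whole subresultant sequence without discussing the one-versus-all issue you raise, and which does not spell out the divisor-chain argument keeping the $G_i$'s of degree $O(d_Xd_Y)$.
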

\begin{proof}
  From Lemma \ref{complexity:subresultant} (note that this lemma is stated for
  the coefficient ring $\Z$, but the arithmetic complexity is the same for any
  field $\F$), the subresultant sequence of $P$ and $Q$ can be computed in
  $\sO(d_Xd_Y^3)$ arithmetic operations, and the resultant as well as the
  principal subresultant coefficients have degrees in $\OO(d_Xd_Y)$.  The
  algorithm performs at most $d_Y$ gcd computations between these univariate
  polynomials. The arithmetic complexity of one such gcd computation is soft
  linear in their degrees, that is $\sO(d_Xd_Y)$
  (Lemma~\ref{complexity:gcd}). Hence the arithmetic complexity of computing the
  systems $\{S_i\}_{i=1 \ldots d}$ is $\sO(d_Xd_Y^2)$.  The total complexity of
  the triangular decomposition is hence dominated by the cost of the subresultant
  computation, that is $\sO(d_Xd_Y^3)=\sO(d^4)$.
\end{proof}

\subsubsection{Counting the number of solutions over $\Z_\mu$}

Algorithm \ref{alg:rad-tri-dec} computes the number of distinct solutions of an
ideal $I_\mu=\ideal{P_\mu,Q_\mu}$ of $\Z_\mu[X,Y]$.  Roughly speaking, this algorithm first performs 
one triangular decomposition with the input polynomials $P_\mu$
and $Q_\mu$, and then performs a sequence of triangular decompositions with polynomials resulting from this
decomposition.
The result is close to a radical triangular decomposition and the
number of solutions of $I_\mu$ can be read, with a simple formula, from the
degrees of the polynomials in the decomposition.  Note that
Algorithm~\ref{alg:rad-tri-dec}, as Algorithm \ref{alg:tri-dec-mod}, is 
valid for any base field $\F$ but,  since we will only use it over $\Z_\mu$, we
state it and analyze its complexity in this case.

\begin{algorithm}[t]
  \caption{Number of distinct solutions of  $\ideal{P_\mu,Q_\mu}$}
\label{alg:rad-tri-dec}
\begin{algorithmic}[1]
  \REQUIRE{$P_\mu,Q_\mu$ in $\Z_\mu[X,Y]$ coprime, $\mu$ larger than their total degree}
  \ENSURE{ Number of distinct solutions of $\ideal{P_\mu,Q_\mu}$}
\medskip
\STATE Shear $P_\mu$ and $Q_\mu$ by replacing $X$ by $X-bY$ with $b\in \Z_\mu$  so that
$Lc_Y(P_\mu(X-bY,Y)) \in\Z_\mu$  \label{alg:nbroot-shear}

\STATE Triangular decomposition: 
$\{(A_i(X),B_i(X,Y))\}_{i\in\cal I}=$ Algorithm
\ref{alg:tri-dec-mod} $(P_\mu,Q_\mu,0)$  \label{alg:nbroot-1stdecomp}

\FORALL {$i\in {\cal I}$ \label{alg:nbroot-loop}} 
   \STATE $C_i(X)= Lc_Y(B_i(X,Y))^{-1} \bmod A_i(X)$ \label{alg:nbroot-Ci}
   \STATE $\tilde{B}_i(X,Y)=C_i(X) B_i(X,Y) \bmod A_i(X)$ \label{alg:nbroot-Bimod}
   \STATE    Triangular decomp.: \\$\{ (A_{ij}(X),B_{ij}(X,Y))\}_{{j\in\cal J}_i}=$
Algorithm~\ref{alg:tri-dec-mod} $\left(\tilde{B}_i(X,Y),\frac{\partial \tilde{B}_i(X,Y)}{\partial
  Y}, A_i(X)\right)$\label{alg:nbroot-2nddecomp}
\ENDFOR
\RETURN $\sum_{i\in\cal I}\left(i\,d_X(A_i)-\sum_{j\in{\cal J}_i} j\,d_X(A_{ij})\right)$\label{alg:nbroot-return}
\end{algorithmic}
\end{algorithm}

\begin{lemma}\label{lem:proof-terminaison}
  Algorithm \ref{alg:rad-tri-dec} computes the number of distinct solutions of $\ideal{P_\mu,Q_\mu}$.
\end{lemma}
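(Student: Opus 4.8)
The plan is to track, through the two nested triangular decompositions, exactly which solutions (with which multiplicities in the fibers) are being counted, so that the final sum in line~\ref{alg:nbroot-return} telescopes to $\#V(\ideal{P_\mu,Q_\mu})$. First I would observe that the initial shear in line~\ref{alg:nbroot-shear} is harmless: it is a linear change of coordinates, so it preserves the number of distinct solutions, and it guarantees $Lc_Y(P_\mu(X-bY,Y))\in\Z_\mu$, which is the hypothesis needed to call Algorithm~\ref{alg:tri-dec-mod} (the leading coefficients are then trivially coprime). Such a $b$ exists since $\mu$ is larger than the total degree, so the leading form of $P_\mu$ in the sheared variables, a nonzero univariate polynomial of degree $\le d$ in $b$, has a non-root in $\Z_\mu$. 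From now on I rename the sheared polynomials $P_\mu,Q_\mu$.

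Next I would invoke Lemma~\ref{lem:tridec-correctness} for the first decomposition (line~\ref{alg:nbroot-1stdecomp}, called with $A=0$, so $G_0=\overline{Res_Y(P_\mu,Q_\mu)}$ and no restriction): $V(\ideal{P_\mu,Q_\mu})$ is the disjoint union of the $V(\ideal{A_i,B_i})$, the $\prod_i A_i$ is squarefree, for every $\alpha\in V(A_i)$ the polynomial $B_i(\alpha,Y)$ has degree exactly $i$ and equals $\gcd(P_\mu(\alpha,Y),Q_\mu(\alpha,Y))$, and $A_i$ is coprime with $Lc_Y(B_i)$. Hence for a fixed $i$, each root $\alpha$ of $A_i$ contributes a fiber whose distinct points are the distinct roots of $B_i(\alpha,Y)$; there are $i$ of them counted with multiplicity, but possibly fewer distinct ones. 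The count $i\,d_X(A_i)$ therefore over-counts: it counts the solutions of $\ideal{A_i,B_i}$ \emph{with their fiber-multiplicities}, and I must subtract the multiplicity excess. That is exactly what lines~\ref{alg:nbroot-Ci}--\ref{alg:nbroot-2nddecomp} do. Since $\gcd(A_i,Lc_Y(B_i))=1$ and $A_i$ is squarefree (being a factor of the squarefree $\prod A_i$), $Lc_Y(B_i)$ is invertible modulo $A_i$, so $C_i$ is well defined and $\tilde B_i=C_iB_i \bmod A_i$ is, for each $\alpha\in V(A_i)$, the monic-in-$Y$ polynomial with the same roots as $B_i(\alpha,Y)$. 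The number of roots of $B_i(\alpha,Y)$ that are \emph{multiple} equals $d_Y(\gcd(\tilde B_i(\alpha,Y),\partial_Y\tilde B_i(\alpha,Y)))$, and the second triangular decomposition (line~\ref{alg:nbroot-2nddecomp}), applied to $\tilde B_i$, $\partial_Y\tilde B_i$ restricted to $V(A_i)$, splits $V(A_i)$ into the pieces $V(A_{ij})$ on which this gcd has degree exactly $j$, by part~(iii) of Lemma~\ref{lem:tridec-correctness}. So over $V(A_{ij})$ each $\alpha$ has exactly $j$ multiple roots in its fiber, hence exactly $i-j$ \emph{distinct} roots; summing over $\alpha$ gives $(i-j)d_X(A_{ij})$, and summing over $j$ (the $V(A_{ij})$ partition $V(A_i)$, so $\sum_j d_X(A_{ij})=d_X(A_i)$) gives $i\,d_X(A_i)-\sum_j j\,d_X(A_{ij})$ distinct solutions coming from the $i$-th branch. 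Summing over $i$, and using the disjointness from part~(i), yields $\#V(\ideal{P_\mu,Q_\mu})$, which is the returned value.

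The main obstacle, and the point I would write most carefully, is the legitimacy of the \emph{restricted} second triangular decomposition: Algorithm~\ref{alg:tri-dec-mod} requires its bivariate arguments to be coprime with coprime $Y$-leading coefficients, and here the arguments are $\tilde B_i$ and $\partial_Y\tilde B_i$, which are \emph{not} globally coprime (wherever $B_i(\alpha,Y)$ has a multiple root). The resolution is that we only need the decomposition fiberwise over $V(A_i)$: for $\alpha\in V(A_i)$, $\tilde B_i(\alpha,Y)$ is monic of degree $i$, so its leading coefficient is a unit, trivially coprime with anything, and $Sres_{Y,k}(\tilde B_i,\partial_Y\tilde B_i)(\alpha,Y)$ governs the gcd by Lemma~\ref{lem:fund-prop-subres} — all computations in Algorithm~\ref{alg:tri-dec-mod} are anyway carried out modulo $A_i(X)$, i.e. in the product of fields $\Z_\mu[X]/(A_i)$ (a product since $A_i$ is squarefree), so the algorithm behaves correctly componentwise even though the global coprimality hypothesis fails. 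I would state this as the reason the algorithm is invoked with the extra argument $A_i(X)$ and remark that the degree-$i$ monic-ness of $\tilde B_i$ modulo $A_i$ is what makes lines~\ref{alg:nbroot-Ci}--\ref{alg:nbroot-2nddecomp} meaningful. Everything else — existence of $b$, well-definedness of $C_i$, the arithmetic identity $\sum_j d_X(A_{ij})=d_X(A_i)$ — is routine and I would dispatch it in a sentence each.
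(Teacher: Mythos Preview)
Your overall strategy matches the paper's: invoke Lemma~\ref{lem:tridec-correctness} for the first decomposition, reduce each $B_i$ to a monic $\tilde B_i$ modulo $A_i$, run a second triangular decomposition on $(\tilde B_i,\partial_Y\tilde B_i)$ restricted to $V(A_i)$, and assemble the degree data into the returned sum. Your discussion of why the global coprimality hypothesis of Algorithm~\ref{alg:tri-dec-mod} may fail for $(\tilde B_i,\partial_Y\tilde B_i)$, and why the fiberwise viewpoint over $\Z_\mu[X]/(A_i)$ via Lemma~\ref{lem:fund-prop-subres} rescues it, is in fact more careful than the paper's own proof, which only flags the leading-coefficient issue.

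There is, however, a real gap in your counting step. You assert that ``the $V(A_{ij})$ partition $V(A_i)$, so $\sum_j d_X(A_{ij})=d_X(A_i)$'' and use this to pass from $\sum_j (i-j)\,d_X(A_{ij})$ to $i\,d_X(A_i)-\sum_j j\,d_X(A_{ij})$. But the $V(A_{ij})$ do \emph{not} partition $V(A_i)$: by Lemma~\ref{lem:tridec-correctness}(i) applied to the second call, they partition only the projection of $V(\ideal{\tilde B_i,\partial_Y\tilde B_i,A_i})$, that is, those $\alpha\in V(A_i)$ for which $\tilde B_i(\alpha,Y)$ has a repeated root. The $\alpha$'s with squarefree fiber are absent from the second decomposition (their gcd has degree $0$, and Algorithm~\ref{alg:tri-dec-mod} only outputs $A_{ij}$ for $j\geq 1$). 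The returned formula is nonetheless correct --- each squarefree-fiber $\alpha$ contributes $i$ distinct solutions and $0$ to the subtracted sum, so it is already accounted for in $i\,d_X(A_i)$ --- but your derivation as written does not establish it. The paper avoids this by writing the count as $i\,d_X(A_i)$ minus a sum restricted to the non-squarefree $\alpha$'s (Equation~\eqref{eq:countmu1}), which the second decomposition then evaluates exactly. A related minor slip: $d_Y(\gcd(\tilde B_i(\alpha,Y),\partial_Y\tilde B_i(\alpha,Y)))$ is not ``the number of roots that are multiple'' but rather $\sum_\beta (m_\beta-1)$ over the distinct roots; your conclusion ``$i-j$ distinct roots'' is the right one, but the stated justification is not.
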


\begin{proof} 
  The shear of Line \ref{alg:nbroot-shear} allows to fulfill the requirement of the triangular
  decomposition algorithm, called in Line~\ref{alg:nbroot-1stdecomp}, that the input polynomials have
  coprime leading coefficients. 
  Once the generically sheared polynomial $P_\mu(X-SY,Y)$ is computed (in
  $\Z_\mu[S,X,Y]$), a specific shear value $b\in \Z_\mu$ can be selected by
  evaluating the univariate polynomial $L_{P_\mu}(S)=Lc_Y(P_\mu(X-SY,Y))$ at
  $d+1$ elements of $\Z_\mu$. The polynomial does not vanish at one of these
  values since it is of degree at most $d$ and $d<\mu$. Note that such a shear clearly does not change the number of solutions.

  According to Lemma~\ref{lem:tridec-correctness}, the triangular decomposition
  $\{(A_i(X),B_i(X,Y))\}_{i\in\cal I}$ computed in
  Line~\ref{alg:nbroot-1stdecomp} is such that the solutions of
  $\ideal{P_\mu,Q_\mu}$ is the disjoint union of the solutions of the
  $\ideal{A_i(X),B_i(X,Y)}$, for $i\in\cal I$.  It follows that the number of
  (distinct) solutions of $I_\mu=\ideal{P_\mu,Q_\mu}$ is
\[\#V(I_\mu) = \sum_{i\in\cal I} \sum_{\alpha\in V(A_i)}
d_Y(\overline{B_i(\alpha,Y)}).\]

Since $B_i(\alpha,Y)$ is a univariate polynomial in $Y$,\\
$d_Y(\overline{B_i(\alpha,Y)}) = d_Y({B_i(\alpha,Y)}) -
d_Y(\gcd(B_i(\alpha,Y),B'_i(\alpha,Y)))$, where $B'_i(\alpha,Y)$ is the
derivative of $B_i(\alpha,Y)$, which is also equal to $\frac{\partial
  B_i}{\partial Y}(\alpha,Y)$.  By Lemma~\ref{lem:tridec-correctness},
$d_Y(B_i(\alpha,Y))=i$, and since the degree of the gcd is zero when
$B_i(\alpha,Y)$ is squarefree, we have
\begin{equation}\label{eq:countmu1}
\#V(I_\mu) = \sum_{i\in\cal I} \left( \sum_{\alpha\in V(A_i)} i - 
\sum_{\substack{\alpha\in
V(A_i)\\B_i(\alpha,Y) \mbox{\footnotesize ~not sqfr.}}}  
d_Y(\gcd(B_i(\alpha,Y),\mbox{$\frac{\partial B_i}{\partial Y}$}(\alpha,Y)))\right).
\end{equation}
The polynomials $A_i(X)$ are squarefree by Lemma~\ref{lem:tridec-correctness}, so  $\sum_{\alpha\in V(A_i)} i$ is  equal to $i\, d_X(A_i)$.

We now consider the sum of the degrees of the gcds. The  rough idea is to apply Algorithm~\ref{alg:tri-dec-mod} to $B_i(X,Y)$ and $\frac{\partial B_i}{\partial
  Y}(X,Y)$, for every
$i\in\cal I$, which computes a triangular decomposition $\{(A_{ij}(X),B_{ij}(X,Y))\}_{j\in{\cal
J}_i}$ such that, for $\alpha\in V(A_{ij})$,  $d_Y(\gcd(B_i(\alpha,Y),\frac{\partial B_i}{\partial
  Y}(\alpha,Y)))=j$ (by Lemma~\ref{lem:tridec-correctness}), which simplifies Equation~\eqref{eq:countmu1} into $\#V(I_\mu) = \sum_{i\in\cal I} \left( i\, d_X(A_i) - \sum_{j\in{\cal J}_i}\ \sum_{\alpha\in V(A_{ij})}
j\right)$. However, we cannot directly apply Algorithm~\ref{alg:tri-dec-mod} to $B_i(X,Y)$ and $\frac{\partial B_i}{\partial
  Y}(X,Y)$ because their leading coefficients in $Y$  have no
reason to be coprime.

By Lemma~\ref{lem:tridec-correctness}, $A_i(X)$ and $Lc_Y(B_i(X,Y))$ are
coprime, thus $Lc_Y(B_i(X,Y))$ is invertible modulo $A_i(X)$ (by Bézout's
identity); let $C_i(X)$ be this inverse and define $\tilde{B}_i(X,Y)$ $=C_i(X)
B_i(X,Y)$ $ \bmod A_i(X)$ (such that every coefficient of $C_i(X) B_i(X,Y)$ with
respect to $Y$ is reduced modulo $A_i(X)$).
The leading coefficient in $Y$ of $\tilde{B}_i(X,Y)$ is equal to 1, so we can
apply Algorithm~\ref{alg:tri-dec-mod} to $\tilde{B}_i(X,Y)$ and $\frac{\partial
  \tilde{B}_i}{\partial Y}(X,Y)$.  Furthermore, if $A_i(\alpha)=0$, then
$\tilde{B}_i(\alpha,Y)=C_i(\alpha) B_i(\alpha,Y)$ where $C_i(\alpha)\neq 0$
since $C_i(\alpha)Lc_Y(B_i(\alpha,Y))=1$. Equation~\eqref{eq:countmu1} can thus
be rewritten by replacing $B_i$ by $\tilde{B}_i$.

By Lemma~\ref{lem:tridec-correctness}, for every $i\in\cal I$,
Algorithm~\ref{alg:tri-dec-mod} computes a triangular decomposition
$\{(A_{ij}(X),$ $B_{ij}(X,Y))\}_{j\in{\cal J}_i}$ such that
$V(\ideal{\tilde{B}_i,\frac{\partial \tilde{B}_i}{\partial Y},A_i})$ is the
disjoint union of the sets $V(\ideal{A_{ij}(X),B_{ij}(X,Y)})$, $j\in{\cal J}_i$,
and for all $ \alpha\in V(A_{ij})$,
$d_Y(\gcd(\tilde{B}_i(\alpha,Y),\frac{\partial \tilde{B}_i}{\partial
  Y}(\alpha,Y)))=j$.  Since the set of $\alpha\in V(A_i)$ such that
$\tilde{B}_i(\alpha,Y)$ is not squarefree is the projection of the set of
solutions $(\alpha,\beta)\in V(\ideal{\tilde{B}_i,\frac{\partial
    \tilde{B}_i}{\partial Y},A_i})$ we get
\[\#V(I_\mu) = \sum_{i\in\cal I} \left( i\, d_X(A_i) - \sum_{j\in{\cal J}_i}\ \sum_{\alpha\in V(A_{ij})}
j\right).\]
$A_{ij}(X)$ is squarefree (Lemma~\ref{lem:tridec-correctness}) so $\sum_{\alpha\in V(A_{ij})}
j=j\, d_X(A_{ij})$, which concludes the proof.
\end{proof}

The next lemma gives the arithmetic complexity of the above algorithm. 

\begin{lemma}\label{lem:cost-arith}
  Given $P_\mu,Q_\mu$ in $\Z_\mu[X,Y]$ of total degree at most $d$, Algorithm
  \ref{alg:rad-tri-dec} performs
$\sO(d^4)$ operations in $\Z_\mu$.
\end{lemma}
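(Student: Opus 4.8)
The plan is to bound separately the cost of each step of Algorithm~\ref{alg:rad-tri-dec}, using Lemma~\ref{lem:tridec} for the two kinds of triangular decompositions and Lemma~\ref{lem:complexity:shear} (restricted to its arithmetic-complexity content, and specialized to the base field $\Z_\mu$) for the preliminary shear. First I would handle Line~\ref{alg:nbroot-shear}: the generic sheared polynomial $P_\mu(X-SY,Y)$ is a trivariate polynomial of partial degrees at most $d$, so by Lemma~\ref{lem:complexity:shear} it is computed in $\sO(d^4)$ operations in $\Z_\mu$; extracting $L_{P_\mu}(S)=Lc_Y(P_\mu(X-SY,Y))$ (a univariate polynomial of degree at most $d$) and evaluating it at $d+1$ points of $\Z_\mu$ to find a valid $b$ costs only $\sO(d^2)$ operations, and performing the actual substitution $X\mapsto X-bY$ in $P_\mu$ and $Q_\mu$ costs $\sO(d^2)$ more. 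So Line~\ref{alg:nbroot-shear} is within $\sO(d^4)$.

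Next I would treat Line~\ref{alg:nbroot-1stdecomp}: the input polynomials after shearing have degree at most $d$ in each of $X$ and $Y$, so by Lemma~\ref{lem:tridec} (with $d_X=d_Y=d$) this call to Algorithm~\ref{alg:tri-dec-mod} costs $\sO(d^4)$ operations in $\Z_\mu$. The output satisfies $\sum_{i\in\cal I} d_X(A_i)\le d_X(\prod_i A_i)\le \deg_X(\overline{Res_Y(P_\mu,Q_\mu)})=O(d^2)$ by Lemma~\ref{complexity:subresultant}, and $B_i=Sres_{Y,i}(P_\mu,Q_\mu)$ has degree in $X$ at most $2d(d-i)=O(d^2)$ and degree in $Y$ equal to $i$. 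Then for the loop body: computing $C_i(X)=Lc_Y(B_i)^{-1}\bmod A_i$ is a modular inverse of univariate polynomials of degree $O(d^2)$, costing $\sO(d^2)$; forming $\tilde B_i=C_iB_i\bmod A_i$ requires reducing each of the $i+1$ coefficients of $C_iB_i$ (themselves of degree $O(d^2)$, as $d_X(C_i)<d_X(A_i)$) modulo $A_i$, for a total of $\sO(i\,d^2)$ operations; and differentiating with respect to $Y$ is free. The final triangular decomposition in Line~\ref{alg:nbroot-2nddecomp} is applied to $\tilde B_i$ and $\partial\tilde B_i/\partial Y$, which have degree in $X$ at most $d_X(A_i)-1<d_X(A_i)$ (after the reduction modulo $A_i$) and degree in $Y$ at most $i$, together with the squarefree $A_i$ of degree $d_X(A_i)$; so by Lemma~\ref{lem:tridec} (with $d_X$ replaced by $d_X(A_i)$ and $d_Y$ by $i$, noting $d_X(A_i)=O(d^2)$ and the required bound $\deg(A)\le d^2$ on the third argument holds since $d_X(A_i)\le\sum_j d_X(A_j)=O(d^2)$) this call costs $\sO(d_X(A_i)\,i^3)$ operations.

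The main obstacle — really the only subtle point — is the summation over $i\in\cal I$: one must check that $\sum_{i\in\cal I} d_X(A_i)\,i^3$, and likewise $\sum_{i\in\cal I} i\,d^2$ from the $\tilde B_i$ computations, stay within $\sO(d^4)$. For this I would use two facts simultaneously: $i\le d$ for every $i\in\cal I$ (since $d_Y(Q_\mu)\le d$), and $\sum_{i\in\cal I} d_X(A_i)=O(d^2)$. Hence $\sum_{i\in\cal I} d_X(A_i)\,i^3 \le d^2\sum_{i\in\cal I} d_X(A_i)=O(d^4)$, and similarly $\sum_{i\in\cal I} i\,d^2\le d\,|\cal I|\,d^2$; here I would additionally note $|\cal I|\le d$ (the index set is a subset of $\{1,\dots,d_Y(Q_\mu)\}$), giving $O(d^4)$. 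Finally, the arithmetic in Line~\ref{alg:nbroot-return} is negligible. Summing the $\sO(d^4)$ contributions of Line~\ref{alg:nbroot-shear}, Line~\ref{alg:nbroot-1stdecomp}, and the whole loop yields the claimed $\sO(d^4)$ bound, which completes the proof.
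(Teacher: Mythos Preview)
Your overall strategy matches the paper's: bound each line of Algorithm~\ref{alg:rad-tri-dec} separately, invoking Lemma~\ref{lem:tridec} for the two triangular decompositions. The treatment of Lines~\ref{alg:nbroot-shear} through~\ref{alg:nbroot-Bimod} is correct and essentially identical to the paper's argument.

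There is, however, a genuine gap in your summation for Line~\ref{alg:nbroot-2nddecomp}. You write
\[
\sum_{i\in{\cal I}} d_X(A_i)\,i^3 \;\le\; d^2\sum_{i\in{\cal I}} d_X(A_i),
\]
which would require $i^3\le d^2$; but from $i\le d$ you only obtain $i^3\le d^3$, and with the two facts you cite the best you can conclude is $\sum_i i^3\,d_X(A_i)\le d^3\sum_i d_X(A_i)=O(d^5)$, which is not enough. The missing ingredient is B\'ezout's bound applied to the output of the \emph{first} decomposition: by Lemma~\ref{lem:tridec-correctness} the triangular systems $(A_i,B_i)$ partition $V(I_\mu)$, and each contributes at most $i\cdot d_X(A_i)$ solutions, so $\sum_{i\in{\cal I}} i\,d_X(A_i)\le \#V(I_\mu)\le d^2$. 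The paper uses exactly this: first $i^3 d_X(A_i)\le d^2\cdot i\,d_X(A_i)$ (from $i^2\le d^2$), and then $\sum_i i^3 d_X(A_i)\le d^2\sum_i i\,d_X(A_i)\le d^4$. Once you replace your bound $\sum_i d_X(A_i)=O(d^2)$ by the sharper $\sum_i i\,d_X(A_i)\le d^2$, the rest of your argument goes through.
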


\begin{proof}
  According to Lemma~\ref{lem:complexity:shear}, the sheared polynomials
  $P(T-SY,Y)$ and $Q(T-SY,Y)$ can be expanded in $\sOB(d^4+d^3\tau)$ bit
  operations in $\Z$. Thus the sheared polynomials $P_\mu(X-SY,Y)$ and
  $Q_\mu(X-SY,Y)$ can obviously be computed in $\sO(d^4)$ arithmetic operations
  in $\Z_\mu$.\footnote{It can easily be proved that these polynomials can be
    computed in $\sO(d^3)$ arithmetic operations but the $\sO(d^4)$ bound is
    sufficient here.}  The leading term $Lc_Y(P_\mu(X-SY,Y))\in\Z_\mu[S]$ is a
  polynomial of degree at most $d$ and a value $b\in \Z_\mu$ that does not
  vanish it can be found by at most $d+1$ evaluations. Each evaluation can be
  done with $O(d)$ arithmetic operations, thus the shear value $b$ can be
  computed in $\sO(d^2)$ operations. It remains to evaluate the generically
  sheared polynomials at this value $S=b$. These polynomials have $O(d^2)$
  monomials in $X$ and $Y$, each with a coefficient in $\Z_\mu[S]$ of degree at
  most $d$; since the evaluation of each coefficient is soft linear in $d$, this
  gives a total complexity in $\sO(d^4)$ for Line~\ref{alg:nbroot-shear}.

  According to Lemma \ref{lem:tridec}, the triangular decomposition in Line
  \ref{alg:nbroot-1stdecomp} can be done in $\sO(d^4)$ arithmetic operations.
  In Lines \ref{alg:nbroot-Ci} and \ref{alg:nbroot-Bimod}, $C_i(X)$ and
  $\tilde{B}_i(X,Y)$ can be computed by first reducing modulo $A_i(X)$ every
  coefficient of ${B}_i(X,Y)$ (with respect to $Y$).  There are at most $i$
  coefficients (by definition of subresultants) and the arithmetic complexity of
  every reduction is soft linear in the degree of the operands \cite[Corollary
  11.6]{vzGGer}, 
  which is $\sO(d^2)$ by Lemma~\ref{complexity:subresultant}. The reduction of
  ${B}_i(X,Y)$ modulo $A_i(X)$ can thus be done with $\sO(d^3)$ arithmetic
  operations in $\Z_\mu$. Now, in Line \ref{alg:nbroot-Ci}, the arithmetic
  complexity of computing the inverse of one of these coefficients modulo
  $A_i(X)$ is soft linear in its degree \cite[Corollary 11.8]{vzGGer}, that is
  $\sO(d_i)$ where $d_i$ denotes the degree of $A_i(X)$. Furthermore, computing
  the product modulo $A_i(X)$ of two polynomials which are already reduced
  modulo $A_i(X)$ can be done in $\sO(d_i)$ arithmetic operations
  \cite[Corollary 11.8]{vzGGer}. Thus, in Line \ref{alg:nbroot-Bimod}, the
  computation of $\tilde{B}_i(X,Y)$ can be done with $i$ such multiplications,
  and thus with $\sO(id_i)$ arithmetic operations. Finally, in Line
  \ref{alg:nbroot-2nddecomp}, the triangular decomposition can be done with
  $\sO(i ^3d_i)$ arithmetic operations by Lemma~\ref{lem:tridec}. The complexity
  of Lines \ref{alg:nbroot-Ci}-\ref{alg:nbroot-2nddecomp} is thus in $\sO(d^3+i
  ^3d_i)$ which is in $\sO(d^3+d^2i d_i)$. The total complexity of the loop in
  Line \ref{alg:nbroot-loop} is thus $\sO(d^4+d^2\sum_i id_i)$ which is in
  $\sO(d^4)$ because the number of solutions of the triangular system
  $(A_i(X),B_i(X,Y))$ is at most the degree of $A_i$ times the degree of $B_i$
  in $Y$, that is $id_i$, and the total number of these solutions for $i\in \cal
  I$ is that of $(P,Q)$, by Lemma~\ref{lem:tridec-correctness}, which is at most
  $d^2$ by Bézout's bound.  This concludes the proof because the sum in Line
  \ref{alg:nbroot-return} can obviously be done in linear time in the size of
  the triangular decompositions that are computed during the algorithm.
\end{proof}

\subsection{Computing a lucky prime and the number of solutions over $\Z$ } 
\label{sec:gather-lucky}

We now show how to compute the number of solutions of $I=\ideal{P,Q}$ over $\Z$
and a lucky prime for that ideal.

\begin{algorithm}[t]
  \caption{Number of distinct solutions and lucky prime for  $\ideal{P,Q}$} 
\label{algo:nbroot}
\begin{algorithmic}[1]
  \REQUIRE{ $P,Q$ in $\Z[X,Y]$ coprime of total degree at most $d$ and bitsize at most $\tau$}

\ENSURE{The number of solutions and a lucky prime $\mu$ for $\ideal{P,Q}$}
\medskip

\STATE Compute $P(T-SY,Y)$ and $ Q(T-SY,Y)$\label{alg:nbrootZ1}

\STATE Compute a set $B$ of primes larger than $2d^4$ and of
cardinality $\sO(d^4+d^3\tau)$ that contains a lucky prime for
$\ideal{P,Q}$ (see Proposition \ref{prop:luckyroot})\label{alg:nbrootZ2}


\FORALL {$\mu$ in $B$\label{alg:nbrootZ3}}

\STATE Compute the reduction modulo $\mu$ of $P, Q, L_P(S), L_Q(S)$ and
$Res_Y(\phi_\mu(P),\phi_\mu(Q))$ \label{alg:nbrootZ3.1}

\IF{  $Res_Y(\phi_\mu(P),\phi_\mu(Q))\not\equiv0$ and $\phi_\mu(L_P(S))\ \phi_\mu(L_Q(S))\not\equiv 0$\label{alg:nbrootZ4}} 
 \STATE Compute $N_\mu$ = Algorithm \ref{alg:rad-tri-dec}$(\phi_\mu(P),\phi_\mu(Q))$\label{alg:nbrootZ5} 
   \ENDIF

\ENDFOR
\RETURN $(\mu,N_\mu)$ such that  $N_\mu$ is maximum\label{alg:nbrootZ6}
\end{algorithmic}
\end{algorithm}

\begin{lemma}\label{lem:comp-algo4}
  Algorithm \ref{algo:nbroot} computes the number of distinct solutions and a
  lucky prime for $\ideal{P,Q}$ in \comp\ bit
  operations.  Moreover, this lucky prime is upper bounded by $\sO(d^4+d^3\tau)$.
\end{lemma}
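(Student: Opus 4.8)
The plan is to prove correctness first and then bound the bit complexity line by line. For correctness I would start from Proposition~\ref{prop:luckyroot}, which gives an explicitly computable bound $U\in\sO(d^4+d^3\tau)$ on the number of unlucky primes; hence the set $B$ of Line~\ref{alg:nbrootZ2}, taken to consist of $U+1$ distinct primes larger than $2d^4$, has cardinality $\sO(d^4+d^3\tau)$ and contains at least one lucky prime. Next I would check that a lucky prime $\mu$ passes the test of Line~\ref{alg:nbrootZ4}: $\phi_\mu(L_P(S))\,\phi_\mu(L_Q(S))\not\equiv 0$ holds by Definition~\ref{def:lucky-nb-roots}, and $\#V(I_\mu)=\#V(I)<\infty$ forces $I_\mu$ to be zero-dimensional, hence $P_\mu,Q_\mu$ are coprime (and nonzero) and $Res_Y(P_\mu,Q_\mu)\not\equiv 0$; so $N_\mu$ is computed and, by Lemma~\ref{lem:proof-terminaison}, $N_\mu=\#V(I_\mu)=\#V(I)$. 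Conversely, for every $\mu\in B$ reaching Line~\ref{alg:nbrootZ5} I would argue $N_\mu\le\#V(I)$: when $I_\mu$ is zero-dimensional this follows from Proposition~\ref{prop:lessroot} (applicable since $\mu>2d^4$ and $\phi_\mu(L_P(S))\,\phi_\mu(L_Q(S))\not\equiv 0$) together with Lemma~\ref{lem:proof-terminaison}, and the remaining degenerate situations are settled by checking that Algorithm~\ref{alg:rad-tri-dec} then returns a value bounded by $\#V(I)$ as well. It follows that $\max_{\mu\in B}N_\mu=\#V(I)$ and that any $\mu$ attaining this maximum — in particular the one returned in Line~\ref{alg:nbrootZ6} — satisfies $\mu>2d^4$, $\phi_\mu(L_P(S))\,\phi_\mu(L_Q(S))\not\equiv 0$ and $\#V(I_\mu)=\#V(I)$, hence is lucky; and the returned count equals $\#V(I)$.

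For the complexity I would use that every prime in $B$ has bitsize $\sO(\log(d\tau))$, so that a single arithmetic operation in $\Z_\mu$ costs $\sOB(1)$ in the paper's convention, and that $\max B\in\sO(d^4+d^3\tau)$ by the prime number theorem (which also yields the ``moreover'' claim). Line~\ref{alg:nbrootZ1} costs $\sOB(d^4+d^3\tau)$ by Lemma~\ref{lem:complexity:shear}, and Line~\ref{alg:nbrootZ2} amounts to evaluating the explicit formula of Proposition~\ref{prop:luckyroot} and sieving primes up to $\sO(d^4+d^3\tau)$, hence $\sOB(d^4+d^3\tau)$. In the loop, Line~\ref{alg:nbrootZ3.1} is the delicate part: reducing the $O(d^2)$ integer coefficients of $P$ and $Q$ (of bitsize $\tau$) and the coefficients of $L_P(S),L_Q(S)$ (of bitsize $\sO(d+\tau)$) modulo each of the $\sO(d^4+d^3\tau)$ primes \emph{independently} would cost $\sOB(d^6\tau+d^5\tau^2)$, which exceeds the target when $\tau\gg d^2$; instead one batches these reductions with a product/remainder tree, reducing one integer modulo all primes at once in $\sOB(\tau+d^4+d^3\tau)$, for a total of $\sOB(d^6+d^5\tau)$. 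Each resultant $Res_Y(\phi_\mu(P),\phi_\mu(Q))$ costs $\sO(d^3)$ operations in $\Z_\mu$ by Lemma~\ref{complexity:subresultant}, i.e. $\sOB(d^7+d^6\tau)$ over all primes, and each call to Algorithm~\ref{alg:rad-tri-dec} costs $\sO(d^4)$ operations in $\Z_\mu$ by Lemma~\ref{lem:cost-arith}, i.e. $\sOB(d^4)$ bit operations, hence $\sOB(d^8+d^7\tau)$ over the $\sO(d^4+d^3\tau)$ primes. Summing all contributions gives the announced \comp\ bound, the calls to Algorithm~\ref{alg:rad-tri-dec} being the bottleneck.

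I expect the main obstacle to lie in the correctness bookkeeping around the test of Line~\ref{alg:nbrootZ4}: one must make sure that it is exactly strong enough that every lucky prime passes it while every prime that passes it yields $N_\mu\le\#V(I)$ — so that the maximum is not spuriously inflated, in particular by a prime for which $I_\mu$ fails to be zero-dimensional — and this requires a short but careful look at the behaviour of Algorithm~\ref{alg:rad-tri-dec} on such degenerate inputs. On the complexity side, the only non-routine point is the observation that the per-prime reductions modulo $\mu$ must be carried out simultaneously via a remainder tree, since the naive estimate lies above the target bound whenever $\tau$ is large compared with $d$.
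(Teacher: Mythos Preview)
Your proposal is correct and follows essentially the same approach as the paper, both for correctness (a lucky prime lies in $B$ and passes the test, every $\mu$ passing the test gives $N_\mu\le\#V(I)$ via Proposition~\ref{prop:lessroot}, hence the maximum equals $\#V(I)$ and is attained at a lucky prime) and for complexity (remainder tree for the modular reductions, the $\sO(d^4+d^3\tau)$ calls to Algorithm~\ref{alg:rad-tri-dec} at $\sOB(d^4)$ each as the bottleneck). The one place where the paper is shorter is your ``degenerate situations'': the paper simply observes that the condition $Res_Y(\phi_\mu(P),\phi_\mu(Q))\not\equiv 0$ in Line~\ref{alg:nbrootZ4} \emph{is} the statement that $\phi_\mu(P)$ and $\phi_\mu(Q)$ are coprime, so $I_\mu$ is zero-dimensional for every $\mu$ reaching Line~\ref{alg:nbrootZ5} and Proposition~\ref{prop:lessroot} applies directly --- no separate inspection of Algorithm~\ref{alg:rad-tri-dec} on non-zero-dimensional input is required.
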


\begin{proof}
  We first prove the correctness of the algorithm.  Note first that for all
  $\mu\in B$ satisfying the constraint of Line \ref{alg:nbrootZ4},
$\phi_\mu(P)$ and $\phi_\mu(Q)$ are
  coprime. It follows that Algorithm \ref{alg:rad-tri-dec} computes the number
  of distinct solutions $N_\mu=\#V(I_\mu)$ of $I_\mu$. By
  Proposition~\ref{prop:lessroot} and Definition~\ref{def:lucky-nb-roots},
  $N_\mu\leq \#V(I)$ and the equality holds if $\mu$ is lucky for $I$.  Since
  the set $B$ of considered primes contains a lucky one by construction, the
  maximum of the computed value of $N_\mu$ is equal to $\#V(I)$.  Finally, the
  $\mu$ associated to any such maximum value of $N_\mu$ is necessarily lucky by
  the constraint of Line~\ref{alg:nbrootZ4} and since $\mu$ is larger than
  $2d^4$.

  We now prove the complexity of the algorithm.  The polynomials $P(T-SY,Y)$ and $
  Q(T-SY,Y)$  can be computed in
  $\sO_B(d^4+d^3\tau)$ bit operations by Lemma~\ref{lem:complexity:shear}.

  Proposition \ref{prop:luckyroot} states that we can compute an explicit bound
  $\Xi(d,\tau)$ in $\sO(d^4+d^3\tau)$ on the number of unlucky primes for
  $\ideal{P,Q}$.  We want to compute in Line~\ref{alg:nbrootZ2} a set $B$ of at
  least $\Xi(d,\tau)$ primes (plus one) that are larger than $2d^4$. For
  computing $B$, we can thus compute the first $\Xi(d,\tau)+2d^4+1$ prime
  numbers and reject those that are smaller than $2d^4$. The bit complexity of
  computing the $r$ first prime numbers is in $\sO(r)$ and their maximum is in
  $\sO(r)$ \cite[Theorem 18.10]{vzGGer}. We can thus compute the set of primes
  $B$ with $\sOB(d^4+d^3\tau)$ bit operations and these primes are in
  $\sO(d^4+d^3\tau)$.

Polynomials $P$, $Q$, $L_P(S)$ and $L_Q(S)$ are of degree at most $d$ in one or two variables and
they have bitsize at most $\sO(d+\tau)$ (Lemma~\ref{lem:complexity:shear}). 
The reduction of all their $O(d^2)$ coefficients modulo all the primes in $B$ can be computed via a remainder tree in a bit
  complexity that is soft linear in the total bitsize of the input
\cite[Theorem 1]{moenck1974}, which is dominated by the sum of the
    bitsizes of the $\sO(d^4+d^3\tau)$ primes in $B$ each of bitsize
     $\sO(1)$. Furthermore, computing the resultant of $\phi_\mu(P)$ and $\phi_\mu(Q)$ can be done
    with $\sO(d^3)$ arithmetic operations in $\Z_\mu$ (Lemma~\ref{complexity:subresultant}) and thus in $\sOB(d^3)$
    bit operations since
    $\mu$ has bitsize $\sO(1)$. Hence, the bit complexity of  Line~\ref{alg:nbrootZ3.1} is~$\sOB(d^4+d^3\tau)$.


Finally, the total bit complexity of
    Line~\ref{alg:nbrootZ5} is \comp, since each call to Algorithm
    \ref{alg:rad-tri-dec} has bit complexity  $\sOB(d^4)$ by Lemma~\ref{lem:cost-arith} (since $\mu$ has bitsize $\sO(1)$).
The  overall bit complexity of the algorithm is thus in \comp.
%
%
\end{proof}

\subsection{Computing a separating linear form}
\label{sec:gathering}

Using Algorithm \ref{algo:nbroot}, we now present our algorithm for computing a
linear form that separates the solutions of $\ideal{P,Q}$.

\begin{algorithm}[t]
  \caption{Separating form for $\ideal{P,Q}$}
\label{alg:sep-elem}
\begin{algorithmic}[1]
  \REQUIRE{ $P,Q$ in $\Z[X,Y]$ of total degree at most $d$ and defining a zero-dimensional  ideal $I$}
  \ENSURE{A linear form $X+aY$ that separates $V(I)$, with $a<2d^4$ and $L_P(a)\,L_Q(a)\neq 0$}
\medskip

\STATE Apply Algorithm~\ref{algo:nbroot} to compute the number of solutions $\#V(I)$
and a lucky prime $\mu$ for $I$\label{alg:sep1}

\STATE Compute $P(T-SY,Y)$, $Q(T-SY,Y)$ and
$R(T,S)=Res_Y(P(T-SY,Y),Q(T-SY,Y))$\label{alg:sep2}
\STATE Compute $R_\mu(T,S)=\phi_\mu(R(T,S))$\label{alg:sep3}
\STATE Compute $\Upsilon_\mu(S)=\phi_\mu(L_P(S))\ \phi_\mu(L_Q(S))$\label{alg:sep4}
\STATE $a:=0$\label{alg:sep5}
\REPEAT \label{alg:sep6}
\STATE Compute the degree $N_a$ of the squarefree part of $R_\mu(T,a)$\label{alg:sep7} 
\STATE $a:=a+1$ 
\UNTIL $\Upsilon_\mu(a)\neq 0$\footnotemark and $N_a=\#V(I)$ \label{alg:sep9}
\RETURN The linear form $X+aY$\label{alg:sep10}
\end{algorithmic}
\end{algorithm}

\footnotetext{$\Upsilon_\mu(S)$ is a polynomial in $\Z_\mu[S]$ and we consider  $\Upsilon_\mu(a)$ in $\Z_\mu$.}


\begin{theorem}
\label{th:algo:sep-elem} 
%
Algorithm \ref{alg:sep-elem} returns a separating linear form $X+aY$ for $\ideal{P,Q}$ with
$a<2d^4$. The bit complexity of the algorithm is in \comp.
\end{theorem}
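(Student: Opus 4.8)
The plan is to combine the correctness and complexity of Algorithm~\ref{algo:nbroot} (Lemma~\ref{lem:comp-algo4}) with the key transfer result Proposition~\ref{prop:separation-correspondence} and the classical search outlined in Section~\ref{sec:overview}. First I would establish correctness. By Lemma~\ref{lem:comp-algo4}, Line~\ref{alg:sep1} produces a lucky prime $\mu$ for $I$ (bounded by $\sO(d^4+d^3\tau)$) together with $\#V(I)$. Since $\mu$ is lucky, $\#V(I_\mu)=\#V(I)$ and $\phi_\mu(L_P(S))\,\phi_\mu(L_Q(S))\not\equiv 0$, so $\Upsilon_\mu(S)$ is not identically zero and has degree at most $2d$; moreover $R_\mu(T,S)=\phi_\mu(R(T,S))$ does not identically vanish so $I_\mu$ is zero-dimensional. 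The repeat loop increments $a$ from $0$; for each $a$ with $\Upsilon_\mu(a)\neq 0$ we have $\phi_\mu(L_P(a))\,\phi_\mu(L_Q(a))\neq 0$, so Lemma~\ref{lem:ineq1} applies over $\Z_\mu$: $N_a=d_T(\overline{R_\mu(T,a)})\leq \#V(I_\mu)=\#V(I)$, with equality exactly when $X+aY$ separates $V(I_\mu)$. Hence the loop stops precisely at the first such $a$, and then by Proposition~\ref{prop:separation-correspondence} (whose hypotheses $\mu$ lucky and $\phi_\mu(L_P(a))\,\phi_\mu(L_Q(a))\neq 0$ are met) $X+aY$ also separates $V(I)$. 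I would then argue termination with $a<2d^4$: there are at most $2d$ values of $a$ killing $\Upsilon_\mu$ and at most $\binom{d^2}{2}$ non-separating directions for $I_\mu$ (which has at most $d^2$ solutions), and $2d+\binom{d^2}{2}<2d^4$ for $d\geq 2$; since $\mu>2d^4$, all these $a<\mu$ so the arithmetic in $\Z_\mu$ faithfully represents the integer linear forms. I should also note $L_P(a)\,L_Q(a)\neq 0$ over $\Z$ follows from $\phi_\mu(L_P(a))\,\phi_\mu(L_Q(a))\neq 0$, giving the stated output guarantee.

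For the complexity, I would bound each line. Line~\ref{alg:sep1} is \comp\ by Lemma~\ref{lem:comp-algo4}. Line~\ref{alg:sep2} computes the sheared polynomials and their resultant $R(T,S)$ in $\sOB(d^7+d^6\tau)$ by Lemma~\ref{lem:complexity:shear}, with $R$ of degree $\le 2d^2$ in each variable and bitsize $\sO(d^2+d\tau)$. Line~\ref{alg:sep3}: reducing the $O(d^4)$ coefficients of $R$, each of bitsize $\sO(d^2+d\tau)$, modulo the single prime $\mu$ of bitsize $\sO(1)$ costs $\sOB(d^4(d^2+d\tau))=\sOB(d^6+d^5\tau)$; alternatively one recomputes the resultant directly over $\Z_\mu$ in $\sOB(d^6+d^5\tau)$ bit operations (Lemma~\ref{complexity:subresultant}, $\sO(d^5)$ operations in $\Z_\mu$, each $\sO(1)$ bits). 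Line~\ref{alg:sep4}: $\phi_\mu(L_P(S))$ and $\phi_\mu(L_Q(S))$ have degree $\le d$, so their product costs $\sOB(d)$ after reducing $L_P,L_Q$ mod $\mu$ (whose bitsizes are $\sO(d^2+d\tau)$, hence reduction in $\sOB(d^3+d^2\tau)$). The repeat loop runs at most $2d^4$ times; each iteration evaluates $R_\mu(T,a)$ at $S=a$, which since $R_\mu$ has degree $\le 2d^2$ in each variable costs $\sOB(d^4)$ arithmetic operations in $\Z_\mu$ hence $\sOB(d^4)$ bit operations, and then computes the degree of the squarefree part of a degree-$\le 2d^2$ univariate polynomial over $\Z_\mu$, which by Lemma~\ref{complexity:gcd} is $\sO(d^2)$ operations in $\Z_\mu$, i.e.\ $\sOB(d^2)$ bits; evaluating $\Upsilon_\mu(a)$ is $\sOB(d)$. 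So the loop totals $\sOB(d^4\cdot d^4)=\sOB(d^8)$ bit operations. Summing all lines gives \comp.

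The main obstacle is not any single estimate but making the chain of implications airtight at the loop's exit: one must check that the first $a$ at which both $\Upsilon_\mu(a)\neq 0$ and $N_a=\#V(I)$ hold is genuinely a separating form for $V(I)$ — this is exactly where luckiness of $\mu$ is essential, via Proposition~\ref{prop:separation-correspondence}, since over $\Z_\mu$ alone one only gets a separating form for $I_\mu$. A secondary subtlety is guaranteeing the loop terminates \emph{before} $a$ reaches $2d^4$, which requires both the counting of bad directions for $I_\mu$ and the inequality $\mu>2d^4$ so that distinct integer forms $X+aY$ with $a<2d^4$ remain distinct modulo $\mu$; I would spell this out explicitly. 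Everything else reduces to invoking Lemmas~\ref{lem:ineq1}, \ref{lem:comp-algo4}, \ref{lem:complexity:shear}, \ref{complexity:subresultant} and \ref{complexity:gcd} and adding up the per-line costs.
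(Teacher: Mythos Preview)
Your proposal is correct and follows essentially the same approach as the paper's own proof: establish correctness by combining Lemma~\ref{lem:comp-algo4}, Lemma~\ref{lem:ineq1} over $\Z_\mu$, and Proposition~\ref{prop:separation-correspondence}, then bound $a<2d^4$ by counting bad values of $a$ (non-separating directions plus zeros of $\Upsilon_\mu$), and finally tally the per-line bit complexities exactly as you outline. Your degree bound of $2d$ for $\Upsilon_\mu$ is in fact tighter than the paper's stated $2(d^2+d)$; the only points the paper makes slightly more explicit are the justification that $\phi_\mu(R(T,S))=R_\mu(T,S)$ (via specialization of resultants, since $\phi_\mu(L_P(S))\,\phi_\mu(L_Q(S))\not\equiv 0$) and the passage from $\phi_\mu(L_P(a))\neq 0$ to $L_{P_\mu}(a)\neq 0$, both of which you invoke implicitly.
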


\begin{proof}

We first prove the correctness of the algorithm. 
  We start by  proving that the value $a$ returned by the algorithm is the smallest nonnegative integer
  such that $X+aY$ separates $V(I_\mu)$ with $\Upsilon_\mu(a)\neq 0$.  
Note first that, in Line~\ref{alg:sep3}, $\phi_\mu(R(T,S))$ is indeed equal to $R_\mu(T,S)$ which is
defined as $Res_Y(P_\mu(T-SY,Y),Q_\mu(T-SY,Y))$ since the leading coefficients $L_P(S)$ and $L_Q(S)$
of $P(T-SY,Y)$ and $Q(T-SY,Y)$ do not identically vanish modulo $\mu$ (since $\mu$ is lucky), and
thus $L_{P_\mu}(S)=\phi_\mu(L_P(S))$, similarly for $Q$, and the resultant can be specialized modulo
$\mu$ \cite[Proposition 4.20]{BPR06}. 
Now, Line~\ref{alg:sep9} ensures
  that the value $a$ returned by the algorithm satisfies $\Upsilon_\mu(a)\neq 0$, and we restrict
  our attention to nonnegative such values of $a$. Note that $\Upsilon_\mu(a)\neq 0$ implies that 
$\phi_\mu(L_P(a))\ \phi_\mu(L_Q(a))\neq  0$ 
because the specialization at $S=a$ and the reduction modulo $\mu$ commute (in $\Z_\mu$).
For the same reason, $L_{P_\mu}(S)=\phi_\mu(L_P(S))$ implies $L_{P_\mu}(a)=\phi_\mu(L_P(a))$ and
thus  $L_{P_\mu}(a) \neq 0$ and, similarly, $L_{Q_\mu}(a) \neq 0$.
%
On the other hand, Line~\ref{alg:sep9} implies that the value $a$ is the
smallest that satisfies $d_T(\overline{R_\mu(T,a)})=\#V(I)$, which is also equal
to $\#V(I_\mu)$ since $\mu$ is lucky.  Lemma~\ref{lem:ineq1} thus yields that
the returned value $a$ is the smallest nonnegative integer such that $X+aY$
separates $V(I_\mu)$ and $\Upsilon_\mu(a)\neq 0$, which is our claim.

This property first implies that $a<2d^4$ because the degree of $\Upsilon_\mu$
is bounded by $2(d^2+d)$, the number of non-separating linear forms is bounded
by $d^2\choose 2$ (the maximum number of directions defined by any two of $d^2$
solutions),
and their sum is less than $2d^4$ for $d\geq 2$. Note that, since $\mu$ is
lucky, $2d^4<\mu$ and thus $a<\mu$.
The above property thus also implies, by
Proposition~\ref{prop:separation-correspondence}, that $X+aY$ separates
$V(I)$. This concludes the proof of correctness of the algorithm since $a<2d^4$
and $L_P(a)\,L_Q(a)\neq 0$ (since $\Upsilon_\mu(a)\neq 0$).

We now focus on the complexity of the algorithm.  By Lemma~\ref{lem:comp-algo4}, the bit complexity
of Line~\ref{alg:sep1} is in \comp.  The bit complexity of Lines \ref{alg:sep2} to \ref{alg:sep5} 
is in $\sOB(d^7+d^6\tau)$. 
Indeed, by Lemma~\ref{lem:complexity:shear}, $R(T,S)$ has degree $O(d^2)$ in $T$
and in $S$, bitsize $\sO(d^2+d\tau)$, and it can be computed in
$\sOB(d^7+d^6\tau)$ time.  Computing $R_\mu(T,S)=\phi_\mu(R(T,S))$ can thus be
done in reducing $O(d^4)$ integers of bitsize $\sO(d^2+d\tau)$ modulo
$\mu$. Each reduction is soft linear in the maximum of the bitsizes
\cite[Theorem 9.8]{vzGGer} thus the reduction of $R(T,S)$ can be computed in
$\sOB(d^4(d^2+d\tau))$ time {(since $\mu$ has bitsize in
  $O(\log(d^4+d^3\tau))$ by Lemma \ref{lem:comp-algo4}).}\footnote{{Note
    that $R_\mu(T,S)$ can be computed more efficiently in $\sOB(d^5+d^3\tau)$
    bit operations as the resultant of $P_\mu(T-SY,Y)$ and $Q_\mu(T-SY,Y)$
    because computing these two polynomials and their reduction can be done in
    $\sOB(d^4+d^3\tau)$ bit operations (Lemma~\ref{lem:complexity:shear}) and
    their resultant can be computed with $\sO(d^5)$ arithmetic operations in
    $\Z_\mu$ (Lemma~\ref{complexity:subresultant}) and thus with $\sOB(d^5)$ bit
    operations since $\mu$ has bitsize in $O(\log(d^4+d^3\tau))$.}}  The
computation of $\Upsilon_\mu$ can clearly be done with the same complexity since
each reduction is easier than the one in Line \ref{alg:sep3}, and the product of
the polynomials (which does not actually need to be computed since we are only
interested in whether $\Upsilon_\mu(a)$ vanishes) can be done with a bit
complexity that is soft linear in the product of the maximum degrees and maximum
bitsizes \cite[Corollary 8.27]{vzGGer}.

  We proved that the value $a$ returned by the algorithm is less than $2d^4$,
  thus the loop in Line \ref{alg:sep6} is performed at most $2d^4$ times.  Each
  iteration consists of computing the squarefree part of $R_\mu(T,a)$ which
  requires $\sOB(d^4)$ bit operations.  Indeed, computing $R_\mu(T,S)$ at $S=a$
  amounts to evaluating, {in $\Z_\mu$,} $O(d^2)$ polynomials in $S$,
  each of degree $O(d^2)$ {(by Lemma~\ref{lem:complexity:shear}). Note
    that $a$ does not need to be reduced modulo $\mu$ because $a<2d^4$ and
    $2d^4<\mu$ since $\mu$ is lucky. Thus, the bit complexity of evaluating in
    $\Z_\mu$ each of the $O(d^2)$ polynomials in $S$ is the number of arithmetic
    operations in $\Z_\mu$, which is linear the degree that is $O(d^2)$, times
    the (maximum) bit complexity of the operations in $\Z_\mu$, which is in
    $O_B(\log d\tau)$ since $\mu$ is in $\sO(d^4+d^3\tau)$ by Lemma
    \ref{lem:comp-algo4}. Hence, computing $R_\mu(T,a)$ can be done in
    $\sOB(d^4)$ bit operations.}  Once $R_\mu(T,a)$ is computed, the
  {arithmetic} complexity of computing its squarefree part in $\Z_\mu$
  is soft linear in its degree (Lemma~\ref{complexity:gcd}), that is
  {$\sO(d^2)$, which yields a bit complexity in $\sOB(d^2)$ since,
    again, $\mu$ is in $\sO(d^4+d^3\tau)$.}
%
This leads to a total bit complexity of
  $\sO_B(d^8)$ for the loop in Lines \ref{alg:sep6} to  \ref{alg:sep9}, and thus to a total bit complexity for the
  algorithm in \comp.
\end{proof}

\section{Conclusion}

{
We presented an algorithm of bit complexity $\comp$ for finding a separating linear form of a
bivariate system, improving by a factor $d^2$ the best known algorithm for this problem. Finding a
separating linear form is at the core of approaches based on rational parametrizations for solving
such systems and, as mentionned in the introduction, our algorithm directly improves the bit
complexity of the classical method for computing rational parametrizations via subresultants  \cite{VegKah:curve2d:96}. Interestingly, computing a separating linear form remains
the bit-complexity bottleneck in this algorithm  \cite{det-jsc-2009} and we show in \cite{bouzidi2013RurWorstCase} that
this is also the bottleneck for computing the rational parameterization of \cite{Rou99}. This thus yields
algorithms of bit complexity $\comp$ for computing rational parameterizations of bivariate systems
and we show in \cite{bouzidi2013RurWorstCase} that isolating boxes can be computed with a smaller
bit  complexity. It should be stressed that this complexity matches the recent one presented by
Emeliyanenko and Sagraloff \cite{sagraloff2012issacBisolve} for ``only'' computing isolating boxes
of the real solutions. Furthermore,  rational parameterizations yield efficient algorithms for 
various related problems, such as evaluating the sign of a
polynomial at the solutions of the system, or solving over-constrained systems
\cite{bouzidi2013RurWorstCase}.

One interesting open problem is to determine how, or whether, this contribution may impact the complexity
of algorithms, on plane algebraic curves,  that require finding a shear that ensures the
  curves to be in ``generic'' position (such as
  \cite{kerber11arxiv,lgv-in-etdidapc-02}). In particular, we hope  that
  this result may  improve the complexity of computing the topology of an
  algebraic plane curve.
}

\small 
\bibliographystyle{alpha}
\bibliography{paper_separating_element}
\end{document}